\renewcommand{\int}[1]{$(\mathit{int}\cup{#1})$}
\newcommand{\bb}[2][]{
  \ifthenelse{\isempty{#1}}
	{\mathbf{#2}}
	{\mathbf{#2^{\mathnormal{#1}}}}
}
\newtheorem{claim}[theorem]{Claim}
\title{$1$-String  $B_2$-VPG Representation of Planar Graphs\footnote{Research supported by NSERC. The second author was supported by the Vanier CGS. A preliminary version appeared at the {\em Symposium on Computational Geometry 2015}.}}
\titlerunning{$\bm{1}$-String  $\bm{B_2}$-VPG Representation of Planar Graphs} %optional, in case that the title is too long; the running title should fit into the top page column
\author[1]{Therese Biedl}
\author[1]{Martin Derka}
\affil[1]{David R.~Cheriton School of Computer Science, University of Waterloo\\
  200 University Ave W, Waterloo, ON N2L 3G1, Canada\\
  \texttt{\{biedl,mderka\}@uwaterloo.ca}}
\authorrunning{T. Biedl and M. Derka} %mandatory. First: Use abbreviated first/middle names. Second (only in severe cases): Use first author plus 'et. al.'
\subjclass{I.3.5 Computational Geometry and Object Modeling}% mandatory: Please choose ACM 1998 classifications from http://www.acm.org/about/class/ccs98-html . E.g., cite as "F.1.1 Models of Computation". 
\keywords{Graph drawing, string graphs, VPG graphs, planar graphs}% mandatory: Please provide 1-5 keywords
\begin{document}

\maketitle

\newcommand{\rev}[0]{\ensuremath{^{\mbox{\scriptsize{\,rev}}}}}

\begin{abstract}
In this paper, we prove that every planar graph has a 1-string $B_2$-VPG representation---a
string representation using paths in a rectangular grid that contain at most two bends. 
Furthermore, two paths representing vertices $u,v$ intersect precisely once whenever there
is an edge between $u$ and $v$. We also show that only a subset of the possible curve shapes is necessary to represent $4$-connected planar graphs.
\end{abstract}

\section{Preliminaries}

One way of representing graphs is to assign to every vertex a curve so
that two curves cross if and only if there is an edge between the respective vertices. 
Here, two curves $\bb{u},\bb{v}$ \emph{cross} means that
they share a point $s$ internal to both of them and 
the boundary of a sufficiently small closed disk around $s$ 
is crossed by $\bb{u},\bb{v},\bb{u},\bb{v}$ (in this order).
Such a representation of graphs using crossing curves is referred to as a \emph{string representation},
and graphs that can be represented in this way are called \emph{string graphs}.

In 1976, Ehrlich, Even and Tarjan showed that every planar graph has a string representation~\cite{cit:tarjan}.
It is only natural to ask if this result holds if one is restricted to using
only some ``nice'' types of curves. In 1984, Scheinerman conjectured that all planar graphs can
be represented as intersection graphs of line segments~\cite{cit:scheinerman}.
This was proved first for bipartite planar graphs~\cite{cit:pach, cit:arroyo} with the strengthening
that every segment is vertical or horizontal. The result was extended to 
triangle-free planar graphs, which can be represented by line segments with at most three distinct slopes~\cite{cit:castro}.

Since Scheinerman's conjecture seemed difficult to prove for all planar
graphs, interest arose in possible relaxations.
Note that any two line segments can intersect at most once.
Define \textsc{1-String} to be the class of graphs that are intersection graphs
of curves (of arbitrary shape) that intersect at most once. We also say that
graphs in this class have a \emph{$1$-string representation}.
The original construction of string representations for planar graphs 
given in~\cite{cit:tarjan} requires curves to cross multiple times. 
In 2007, Chalopin, Gon\c{c}alves and Ochem showed that every
planar graph is in \textsc{1-String}~\cite{cit:chalopin-gonclaves-ochem, cit:chalopin-string}.  With respect to Scheinerman's
conjecture, while the argument of~\cite{cit:chalopin-gonclaves-ochem, cit:chalopin-string} shows that the prescribed number
of intersections can be achieved, it provides no idea on the complexity of curves that is required.   

Another way of restricting curves in string representations is to require them
to be \emph{orthogonal}, i.e., to be paths in a grid.  Call a graph a
{\em VPG-graph} (as in ``Vertex-intersection graph of Paths in a Grid'')
if it has a string representation with orthogonal curves.
It is easy to see that all planar graphs are VPG-graphs (e.g.~by generalizing
the construction of Ehrlich, Even and Tarjan).  For bipartite planar graphs,
curves can even be required to have no bends \cite{cit:pach, cit:arroyo}.
For arbitrary planar graphs, bends are required in orthogonal curves.
Recently, Chaplick and Ueckerdt showed that two bends per curve always suffice
\cite{cit:chaplick}.  Let {\em $B_2$-VPG} be the graphs that have
a string representation where curves are
orthogonal and have at most two bends; the result in
\cite{cit:chaplick} then states that planar graphs are in $B_2$-VPG.
Unfortunately, in Chaplick and Ueckerdt's construction, curves may cross 
each other twice, and so it
does not prove that planar graphs are in \textsc{1-String}.

The conjecture of Scheinerman remained open until 2009 when it was proved true by Chalopin and Gon\c{c}alves~\cite{cit:chalopin-seg}.

\medskip\noindent{\bf Our Results: }
In this paper, we show that every planar graph has a 
string representation that simultaneously satisfies the requirements for
\textsc{1-String} (any two curves cross at most once) and the requirements
for $B_2$-VPG (any curve is orthogonal and has at most two bends).
Our result hence re-proves, in one construction, the results
by Chalopin et al.~\cite{cit:chalopin-gonclaves-ochem, cit:chalopin-string} and the result by
Chaplick and Ueckerdt \cite{cit:chaplick}. 

\begin{theorem}
\label{thm:main-claim}
Every planar graph has a $1$-string $B_2$-VPG rep\-re\-sen\-ta\-tion. 
\end{theorem}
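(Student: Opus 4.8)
The plan is to reduce the general case to a structurally nice subclass and then build the representation explicitly by induction on a planar decomposition. First I would observe that it suffices to prove the theorem for maximal planar graphs (triangulations): any planar graph is an induced subgraph of some triangulation, and deleting curves from a $1$-string $B_2$-VPG representation preserves all the required properties, so a representation of the triangulation restricts to one of the original graph. Having fixed a triangulation $G$ with a planar embedding and an outer triangle, I would then set up an induction that peels the graph apart along separating triangles and along a canonical-ordering-type structure. The natural inductive object is a triangulation together with a prescribed outer face whose three vertices are already represented by curves in a controlled ``boundary'' configuration (for instance, three curves each contributing a bend or an endpoint along the sides of an axis-aligned rectangle), and the inductive claim produces a representation of the whole triangulation drawn inside that rectangle, with all other curves in the interior.

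The core of the argument is the inductive step. Here I would use a vertex-ordering of the triangulation — a canonical ordering or Schnyder-wood-based shelling — so that vertices are added one at a time, each new vertex $v_k$ seeing a contiguous path $w_1,\dots,w_t$ on the outer boundary of the graph built so far. When $v_k$ is inserted, its curve must cross exactly the curves of $w_1,\dots,w_t$, each exactly once, must have at most two bends, and must leave the boundary of the current region in the right shape so the induction can continue. The key geometric idea is to route the new curve as a staircase-like orthogonal path with at most two bends that runs ``just above'' the current boundary, stabbing each $w_i$ once; the two available bends let the curve go up, across, and back down, which is exactly the flexibility needed to reach a contiguous run of boundary curves while staying within the rectangle. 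Careful bookkeeping of which of the (finitely many) two-bend shapes — $\llcorner$, $\ulcorner$, $\urcorner$, $\lrcorner$, and the ``$Z$''/``$C$'' shapes with two bends — is assigned to each vertex is what makes the crossings come out to exactly one. I would also handle separating triangles by recursing into the interior of such a triangle using the three curves of that triangle as the new boundary frame, which is where the flexibility of allowing the boundary curves to present either an endpoint or a bend to the frame pays off.

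The main obstacle I expect is controlling the interaction between the ``at most one crossing'' constraint and the ``at most two bends'' constraint simultaneously as the boundary evolves. Individually each is easy — one can always add a wiggly curve that crosses each neighbour once if bends are unlimited, and one can always realize adjacency with two-bend curves if multiple crossings are allowed — but forcing both at once means the boundary of the current region must be maintained in a canonical ``shape'' (I would aim for something like: the outer curves, read in order around the boundary, expose their attachment points in a monotone staircase) so that a new two-bend curve can always be threaded through. Proving that this invariant can be re-established after each insertion, in every case of how the new vertex attaches (including when it becomes part of the new outer boundary in several consecutive positions, and when nested separating triangles force recursion), is the technical heart and will require a careful case analysis on the shapes of the curves $w_1,\dots,w_t$ and on the combinatorial position of $v_k$.

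Finally, the last sentence of the abstract suggests that for $4$-connected planar graphs — which have no separating triangles, so the recursion-into-triangles case disappears and the canonical ordering is better behaved — one can carry out the same induction while only ever using a restricted palette of curve shapes; I would prove that stronger statement by the same induction, simply checking that the recursion never calls for the excluded shapes once separating triangles are absent.
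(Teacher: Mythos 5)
Your outer skeleton matches the paper: reduce to triangulations (the paper does this by stellation), handle separating triangles by recursing into their interior using the three triangle curves as a frame (the paper formalizes this with ``private regions'' reserved next to every interior face), and prove the separating-triangle-free case by a boundary-controlled induction. The gap is in the core of that last step. You propose a canonical-ordering/shelling insertion in which each new vertex $v_k$ is a single two-bend curve threaded across the contiguous boundary path $w_1,\dots,w_t$, with an invariant that the boundary curves ``expose their attachment points in a monotone staircase.'' This invariant cannot be maintained. A vertex can stay on the outer boundary for arbitrarily many steps and serve as the \emph{end} neighbour ($w_1$ or $w_t$) of arbitrarily many later vertices. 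If $v_k$ is an up--across--down arch, its horizontal segment can stab the upward rays of $w_2,\dots,w_{t-1}$, but the end curves $w_1,w_t$ must be crossed by the vertical legs of the arch, which costs each of them a horizontal feature at a fresh height --- and they have only two bends to spend over their entire lifetime on the boundary. Conversely, if the horizontal segment stabs all of $w_1,\dots,w_t$, then $v_k$ itself has no upward ray left to offer its own future neighbours without a third bend. Every way of resolving this either exceeds two bends, creates a second crossing with some $w_i$, or creates a crossing with a non-neighbour whose ray lies between $w_1$ and $w_t$. This is precisely the known obstruction: Chaplick and Ueckerdt's shelling-type construction achieves $B_2$ only by permitting two crossings per pair, and the $\{L,\Gamma\}$ question the paper cites as open is essentially the statement that a one-bend version of your invariant could be maintained. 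Deferring this to ``careful case analysis'' hides the actual difficulty of the theorem.

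The paper avoids vertex-by-vertex insertion entirely. Its key lemmas (Lemmas~\ref{lem:2-sided} and~\ref{lem:3-sided}) concern W-triangulations with three designated corners $A,B,C$ satisfying a \emph{chord condition}, and recurse by Whitney/Asano--Saito--Kikuchi-style splits: along a chord incident to $C$, or along the neighbourhood path $u_1,\dots,u_q$ of $C$, producing subgraphs that again satisfy the chord condition. The inductive invariant is not a staircase of attachment points but one of three rigid layout types (2-sided, 3-sided, reverse 3-sided) specifying which side of a bounding box each boundary curve exits and how many bends it has already used; the ``tangling'' operation then realizes the edges along the cut paths. If you want to salvage your plan, you would need to replace the canonical-ordering step with a decomposition of this kind, or else supply a genuinely new argument for why the staircase invariant survives insertion --- which would be a result of independent interest, not a routine verification.
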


In addition to Theorem~\ref{thm:main-claim}, we show that for $4$-connected planar graphs, only a subset of orthogonal curves with $2$ bends is needed:

\begin{theorem}
\label{thm:cz}
Every $4$-connected planar graph has a $1$-string $B_2$-VPG rep\-re\-sen\-ta\-tion where all curves have a shape of C or Z (including their horizontal mirror images). 
\end{theorem}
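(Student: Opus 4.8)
The plan is to build the representation on top of the strongest structural consequence of $4$-connectivity for planar graphs, namely Hamiltonicity: by Tutte's theorem every $4$-connected planar graph $G$ has a Hamiltonian cycle $C$. Fixing a planar embedding, $C$ partitions the remaining edges into those drawn inside $C$ and those drawn outside, and within each part no two edges cross. Cutting $C$ open places the vertices $v_1,\dots,v_n$ in linear order on a horizontal \emph{spine}, turning each edge into a chord and the two parts into two \emph{laminar} families of intervals (any two intervals in the same family are nested or disjoint, never interleaved). In short, I would start from a $2$-page book embedding of $G$, which a $4$-connected planar graph always admits; the Hamiltonicity is the only place where $4$-connectivity is used.

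The representation I would construct assigns to $v_i$ a curve consisting of a single vertical segment in column $x=i$ that crosses the spine, together with one horizontal \emph{arm} reaching into the upper half-plane (the ``inside'' page) and one horizontal arm reaching into the lower half-plane (the ``outside'' page). Such a curve has at most two bends, and its shape is exactly a horizontal--vertical--horizontal polyline: a C if the two arms point the same way and a Z if they point opposite ways (with an absent arm giving a degenerate sub-shape). Each edge is realized by a single arm-versus-segment crossing: for a top-page edge $v_iv_j$ with $i<j$ I let the \emph{left} endpoint $v_i$ send its upper arm rightward at some height $h_i$ and make $v_j$'s vertical segment tall enough to meet it there, and the bottom page is handled symmetrically in the disjoint lower half-plane.

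The crux is the quantitative assignment of arm heights and vertical extents so that within each page every curve crosses precisely its neighbors on that page and \emph{no others}, and so that no pair crosses twice. I would assign to each interval a height equal to its nesting depth, drawing outer intervals higher, and let each vertical segment reach up to the highest level at which it must be met. Laminarity is what makes this work: if a non-neighbor's column lies under $v_i$'s rightward arm at height $h_i$, then that vertex is nested strictly inside the interval that $v_i$'s arm realizes, so it can be capped below $h_i$ and the arm passes over it without a crossing; the interleaving that would force a spurious crossing is exactly what planarity forbids on a single page. The \textsc{1-String} condition (no pair crossing twice) then follows because each edge lies on only one page, so the two half-planes cannot both contribute a crossing to the same pair, and because directing every arm to the same side means each curve is only ever reached from its left and reaches only to its right.

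I expect the main obstacle to be precisely this bookkeeping: proving that the nesting-depth height function is simultaneously realizable for every vertex, including vertices that are the left endpoint of several nested intervals while also lying strictly inside others, and slotting the Hamiltonian-cycle edges $v_iv_{i+1}$ and the closing edge $v_nv_1$ into the page families without creating interleavings. Once the height function is shown realizable for an arbitrary laminar family, the two pages compose independently and the shape bound follows by construction; in fact directing all arms uniformly already yields only C-shapes, and permitting the opposite direction supplies the Z-shapes allowed by the statement.
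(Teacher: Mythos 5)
Your reduction to a 2-page book embedding via Tutte's theorem is sound as far as it goes, but the step you defer as ``bookkeeping'' is in fact the crux of the proof, and the specific scheme you propose for it (heights equal to nesting depth, all arms directed the same way) provably fails. Note first that since your curves are horizontal--vertical--horizontal paths, a vertex's top arm must sit at the \emph{top end} of its vertical segment, so ``the height at which $v$ must be met'' and ``the height of $v$'s own arm'' are the same number $y_v$. Now take a page whose chords include $(1,3)$, $(3,5)$ and $(1,6)$, where $(1,5)$ is a non-edge. With all top arms pointing rightward, edge $(1,3)$ forces $y_3>y_1$, edge $(3,5)$ forces $y_5>y_3$, yet vertex $1$'s arm must reach column $6$ to realize $(1,6)$, so it passes over column $5$; since $(1,5)$ is a non-edge, vertex $5$ must stay below that arm, forcing $y_5<y_1$ --- a cyclic set of constraints with no solution. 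This page pattern really occurs: the $7$-vertex planar triangulation with Hamiltonian cycle $1\cdots 7$, inside chords $(1,3),(3,5),(1,6),(3,6)$ and outside chords $(2,4),(4,7),(2,7),(5,7)$ has no separating triangle (hence is $4$-connected), and the page of each chord is forced by the embedding. This also pinpoints where your laminarity argument breaks: vertex $5$ is indeed nested strictly inside the interval $[1,6]$, but it cannot be ``capped below'' vertex $1$'s arm, because the chain of edges $(1,3),(3,5)$ realized left-to-right inside that interval forces its top ever higher. Being met from the left and being capped from above are conflicting requirements, and nesting depth does not reconcile them. In particular your closing claim that ``directing all arms uniformly already yields only C-shapes'' is false.

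The approach might conceivably be rescued by mixing arm directions (realizing some edges by a leftward arm of the right endpoint, i.e., genuinely using both C- and Z-shapes), and indeed the example above becomes realizable that way; but then you need a lemma that a consistent choice of directions and heights exists for every pair of laminar families arising from a Hamiltonian cycle --- possibly even a lemma that a suitable Hamiltonian cycle can be chosen --- and your proposal contains no argument for this. You also cannot transplant the missing lemma from the paper, because the paper's proof of this theorem does not use Hamiltonicity at all: it proves Lemma~\ref{lem:2-sided} by induction on W-triangulations satisfying the chord condition (a Whitney/Asano--Saito--Kikuchi style decomposition with $2$-sided layouts), completes the outer face by tangling, and handles non-triangulated $4$-connected graphs by stellation. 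So as it stands the proposal has a genuine gap at its central step, not a routine verification.
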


Our approach is inspired by the construction of 1-string representations from 
2007~\cite{cit:chalopin-gonclaves-ochem, cit:chalopin-string}. 
The authors proved the result in two steps. First,
they showed that triangulations without separating triangles 
admit 1-string representations. By induction on the number of 
separating triangles, they then showed that a 1-string representation
exists for any planar triangulation, and consequently for any 
planar graph. 

In order to show that triangulations without separating triangles
have 1-string representations, Chalopin et al.~\cite{cit:chalopin-string} used
a method inspired by Whitney's proof that 4-connected planar graphs
are Hamiltonian~\cite{cit:whitney}. Asano, Saito and Kikuchi later improved
Whitney's technique and simplified his proof~\cite{cit:ham-cycle}. 
Our paper uses the same approach as~\cite{cit:chalopin-string}, but borrows ideas from~\cite{cit:ham-cycle}
and develops them further to reduce the number of cases.

\section{Definitions and Basic Results}

Let us begin with a formal definition of a \emph{$1$-string $B_2$-VPG representation}.

\begin{definition}[$1$-string $B_2$-VPG representation]
A graph $G$ has a \emph{1-string $B_2$-VPG representation} if every vertex $v$ of $G$ can be
% TB: deleted "planar" because that's not defined yet and not needed here
represented by a curve $\mathbf{v}$ such that: 
\begin{enumerate}
    \item Curve $\mathbf{v}$ is \emph{orthogonal}, i.e., it consists of horizontal and vertical segments.
    \item Curve $\mathbf{v}$ has at most two bends.
    \item Curves $\mathbf{u}$ and $\mathbf{v}$ intersect at most once, and $\bb{u}$ intersects $\bb{v}$ if and
	only if $(u,v)$ is an edge of $G$.
\end{enumerate}
\end{definition}

We always use $\bb{v}$ to denote the curve of vertex $v$, and write $\bb[R]{v}$ if the
representation $R$ is not clear from the context. We also often omit ``1-string $B_2$-VPG''
since we do not consider any other representations.

Our technique for constructing representations of a graph uses an intermediate step
referred to as a ``partial $1$-string $B_2$-VPG representation
%\footnote{Here, \emph{int} is used as an abbreviation of \emph{interior}.} 
of a W-triangulation that satisfies the chord condition with respect to three chosen corners.'' 
We define these terms, and related graph terms, first.  

A {\em planar graph} is a graph that can be embedded in the plane, i.e., it can be
drawn so that no edges intersect except at common endpoints.  All graphs in this
paper are planar.  We assume throughout the paper that one 
\emph{combinatorial embedding} of the 
graph has been fixed by specifying the clockwise (CW) cyclic order of incident edges around
each vertex.  Subgraphs inherit this embedding, i.e., they use the induced clockwise orders.
A {\em facial region} is a connected region of $\mathbb{R}_2-\Gamma$ where $\Gamma$ is a planar
drawing of $G$ that conforms with the combinatorial embedding.  
The circuit bounding this region
can be read from the combinatorial embedding of $G$ and is referred to as a {\em facial circuit}. We sometimes refer to both facial circuit and facial region as a \emph{face} when the precise meaning is clear from the context. The
{\em outer-face} is the one that corresponds to the unbounded region; all others are
called {\em interior faces}.  The outer-face cannot be read from the embedding; we
assume throughout this paper that the outer-face of $G$ has been specified.  Subgraphs inherit
the outer-face by using as outer-face the one whose facial region contains the facial
region of the outer-face of $G$.    An edge of $G$ is called {\em interior} if it does
not belong to the outer-face.

A \emph{triangulated disk} is a planar graph $G$ for which the outer-face is a simple
cycle and every interior face is a triangle. 
A {\em separating triangle} is a cycle $C$ of length $3$ such that $G$ has
vertices both inside and outside the region bounded by $C$ (with respect to the
fixed embedding and outer-face of $G$). 
Following the notation of
\cite{cit:chalopin-string}, a \emph{W-triangulation} is a triangulated disk
that does not contain a separating triangle.
A {\em chord} of a triangulated disk is an
interior edge for which both endpoints are on the outer-face.

Let $X, Y$ be two vertices on the outer-face of a connected planar graph so that neither of them is a cut vertex. Define $P_{XY}$ to be the counter-clockwise (CCW) path on the outer-face from $X$ to $Y$ ($X$ and $Y$ inclusive).
We often study triangulated disks with
three specified distinct vertices $A,B,C$ called the {\em corners}. 
$A,B,C$ must appear on the outer-face in CCW order. 
We denote $P_{AB} = (a_1, a_2, \ldots, a_r)$, $P_{BC} = (b_1, b_2, \ldots, b_s)$ 
and $P_{CA} = (c_1,c_2,\ldots,c_t)$, where $c_t = a_1 = A$, $a_r = b_1 = B$ 
and $b_s = c_1 = C$.

\begin{definition}[Chord condition]
\label{def:chord-condition}
A W-triangulation $G$ satisfies the \emph{chord condition} with respect
to the corners $A,B,C$ if $G$ has no chord within $P_{AB}, P_{BC}$ or $P_{CA}$,
i.e., no interior edge of $G$ has 
both ends on $P_{AB}$, or both ends on $P_{BC}$, or
both ends on $P_{CA}$.\footnote{For readers familiar with \cite{cit:chalopin-string}
or \cite{cit:ham-cycle}:
A W-triangulation that satisfies the chord condition with respect
to corners $A,B,C$ is called a \emph{W-triangulation 
with 3-boundary $P_{AB},P_{BC},P_{CA}$} 
in \cite{cit:chalopin-string},
and the chord condition is 
the same as \emph{Condition (W2b)} in~\cite{cit:ham-cycle}.}
\end{definition}

\begin{definition}[Partial $1$-string $B_2$-VPG representation]
Let $G$ be a connected planar graph and $E' \subseteq E(G)$ be a set of edges. 
% TB: "plane" is not defined; we use "planar" to mean the same thing
An {\em $(E')$-1-string $B_2$-VPG representation} of $G$ is a 1-string $B_2$-VPG representation of 
the subgraph $(V(G),E')$, i.e.,
% TB: We had ``$G$'' here, but that's actually incorrect if we use "1-string"
curves $\bb{u},\bb{v}$ cross if and only if $(u,v)$ is an edge in $E'$.
If $E'$ consists of all interior edges of $G$ as well as some set of edges $F$ on the
outer-face, then we write \emph{\int{F} representation} instead. 
\end{definition}

In our constructions, we use \int{F} representations with $F=\emptyset$ or $F=\{e\}$, where $e$
is an outer-face edge incident to corner $C$ of a W-triangulation.   
% TB: we sometimes had "exterior edge", sometimes "outer edge", sometimes "outer-face edge"
% Unified to "outer-face edge"
Edge $e$ is called the \emph{special edge}, and we sometimes write
\int{e} representation, rather than \int{\{e\}} representation.

\subsection{$2$-Sided, $3$-Sided and Reverse $3$-Sided Layouts}

To create representations where vertex-curves have few bends, we need
to impose geometric restrictions on representations of subgraphs.  
Unfortunately, no one type of layout seems sufficient for all cases,
and we will hence have three different layout types 
illustrated in Figure~\ref{fig:layoutTypes}.

\begin{definition}[$2$-sided layout]
Let $G$ be a connected planar graph and $A,B$ be two distinct outer-face vertices neither of which is a cut vertex in $G$. Furthermore, let $G$ be such that all cut vertices separate it into at most two connected components.  
An \int{F} 
$B_2$-VPG representation of $G$ (for some set $F$) has a \emph{$2$-sided layout}  (with respect
to corners $A,B$) if:
\begin{enumerate}
    \item There exists a rectangle $\Theta$ that contains all intersections of curves
    and such that 
	\begin{enumerate}[(i)]
\item the top of $\Theta$ is intersected, from right to left in order, 
    by the curves of the vertices of $P_{AB}$, 
\item the bottom of $\Theta$ is intersected,
	from left to right in order, by the curves of the vertices of $P_{BA}$. 
\end{enumerate}
    \item Any curve $\bb{v}$ of an outer-face vertex $v$ has at most one bend.
    (By 1., this implies that $\bb{A}$ and $\bb{B}$ have no bends.)
\end{enumerate}
\end{definition}

\begin{definition}[$3$-sided layout]
Let $G$ be a $W$-triangulation and $A,B,C$ be three distinct vertices in CCW order on the outer-face of $G$. 
Let $F$ be a set of exactly one outer-face edge incident to $C$.
An \int{F} $B_2$-VPG representation of $G$ has a \emph{$3$-sided layout} 
(with respect to corners $A,B,C$) if:
\begin{enumerate}
    \item There exists a rectangle $\Theta$ containing all intersections of curves 
    so that 
	\begin{enumerate}[(i)]
	\item the top of $\Theta$ is intersected, from right to left in order, 
    	by the curves of the vertices on $P_{AB}$;
	\item the left side of $\Theta$ is intersected, from top to bottom in order, by the
    curves of the vertices on $P_{Bb_{s-1}}$, possibly followed by $\bb{C}$;
\footnote{Recall that $(b_{s-1},C)$ and $(C,c_2)$ are the two incident edges of
$C$ on the outer-face.}
	\item the bottom of $\Theta$ is intersected, from right to left in order, by the curves of vertices 
    on $P_{c_2A}$ in reversed order, possibly followed by $\bb{C}$;%
\addtocounter{footnote}{-1}%
\footnotemark
	\item curve $\bb{b_s} = \bb{C} = \bb{c_1}$ intersects the boundary of $\Theta$ exactly once; it
    is the bottommost curve to intersect the left side of $\Theta$ if the special edge in
   $F$ is $(C,c_2)$, and $\bb{C}$ is
    the leftmost curve to intersect the bottom of $\Theta$ if the special edge in $F$ is $(C,b_{s-1})$.
    	\end{enumerate}
    \item Any curve $\bb{v}$ of an outer-face vertex $v$ has at most one bend.
    (By 1., this implies that $\bb{B}$ has precisely one bend.) 
    \item $\bb{A}$ and $\bb{C}$ have no bends.
\end{enumerate} \end{definition}

We also need the concept of a \emph{reverse $3$-sided layout},
which is similar to the 3-sided layout except that $B$ is straight and $A$ has a bend.  Formally:

\begin{definition}[Reverse $3$-sided layout]
Let $G$ be a connected planar graph and $A,B,C$ be three distinct vertices in CCW order on the outer-face of $G$. 
Let $F$ be a set of exactly one outer-face edge incident to $C$.
An \int{F} $B_2$-VPG representation of $G$ has a \emph{reverse $3$-sided layout} 
(with respect to corners $A,B,C$) if:
\begin{enumerate}
    \item There exists a rectangle $\Theta$ containing all intersections of curves 
    so that 
	\begin{enumerate}[(i)]
	\item the right side of $\Theta$ is intersected, from bottom to top in order, 
    	by the curves of the vertices on $P_{AB}$;
	\item the left side of $\Theta$ is intersected, from top to bottom in order, by the
    curves of the vertices on $P_{Bb_{s-1}}$, possibly followed by $\bb{C}$;
	\item the bottom of $\Theta$ is intersected, from right to left in order, by the curves of vertices 
    on $P_{c_2A}$ in reversed order, possibly followed by $\bb{C}$;%
	\item curve $\bb{b_s} = \bb{C} = \bb{c_1}$ intersects the boundary of $\Theta$ exactly once; it
    is the bottommost curve to intersect the left side of $\Theta$ if the special edge in
   $F$ is $(C,c_2)$, and $\bb{C}$ is
    the leftmost curve to intersect the bottom of $\Theta$ if the special edge in $F$ is $(C,b_{s-1})$.
    	\end{enumerate}
    \item Any curve $\bb{v}$ of an outer-face vertex $v$ has at most one bend.
    (By 1., this implies that $\bb{A}$ has precisely one bend.) 
    \item $\bb{B}$ and $\bb{C}$ have no bends.
\end{enumerate} \end{definition}

\begin{figure}[ht]
\includegraphics[width=\textwidth]{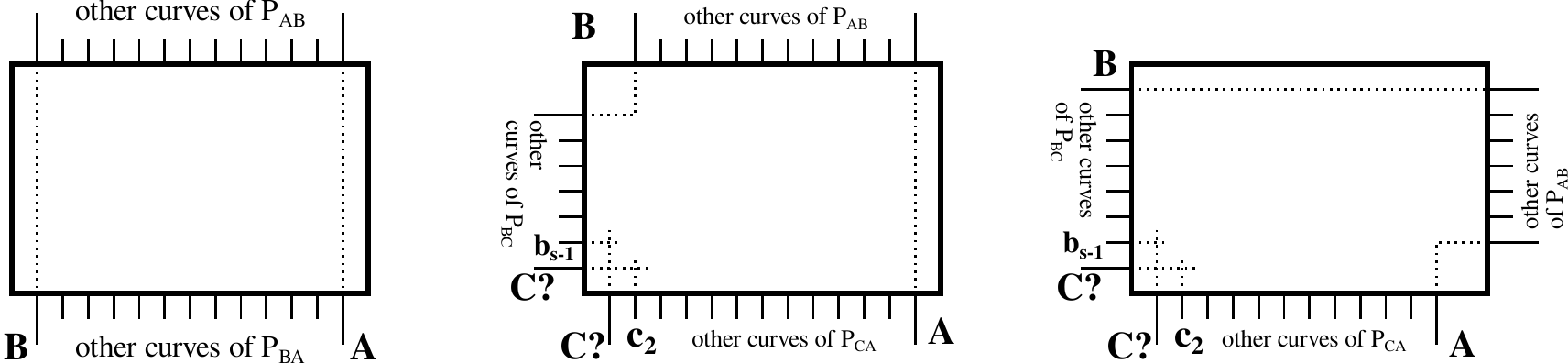}
\caption{Illustration of a 2-sided layout, 3-sided layout, and
reverse 3-sided layout.}
\label{fig:layoutTypes}
\end{figure}

%\TBcomment{In Figure~\ref{fig:layoutTypes}, %\ref{fig:case1-2sided},
%    \ref{fig:case-1-3-sided-CZ},  \ref{fig:cz-simpler} and the %equivalent
%ones in later sections, the special edges should be marked with %hatches
%as you do in the appendix.}

We sometimes refer to the 
rectangle $\Theta$ for these
representations as a~\emph{bounding box}.
Figure~\ref{fig:base-case} (which will serve as base case later) 
shows such layouts for a triangle and varying choices of $F$.
\iffalse
We also need the concept of a \emph{reverse $3$-sided layout},
which is similar to the 3-sided layout except that $B$ is straight and $A$ has a bend.
Formally, it satisfies conditions 1(ii-iv) and (2).  1(i) is replaced by
	``the right side of $\Theta$ is intersected, from bottom to top in order, 
    	by the curves of the vertices on $P_{AB}$''
and (3) is replaced by
    ``$\bb{B}$ and $\bb{C}$ have no bends.'' 
\fi

\begin{figure}
	\centering
    \includegraphics[width=\textwidth]{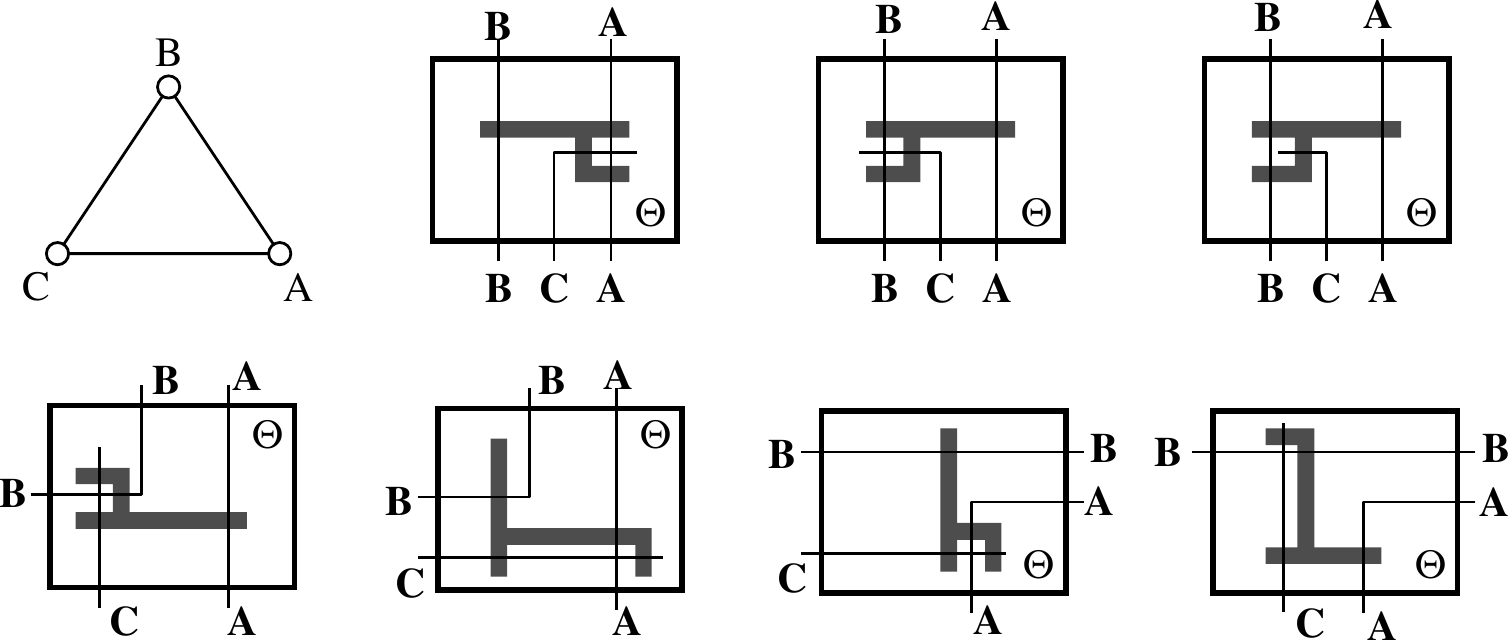}
	\caption{\int{F} representations of a triangle:
(Top) 2-sided representations for $F \in \{{\{(A,C)\}, \{(B,C)\}, \emptyset}\}$. 
(Bottom) 3-sided and reverse 3-sided representations for $F \in \{{\{(A,C)\}, \{(B,C)\}}\}$. 
Private regions are shaded in grey.}
	\label{fig:base-case}
    \label{fig:base-case-3-sided}
\end{figure}

\subsection{Private Regions}

Our proof starts by constructing a representation for 
triangulations without separating triangles.  The construction is then
extended to all triangulations by merging representations of subgraphs
obtained by splitting at separating triangles.  To permit the merge, 
we apply the technique used in~\cite{cit:chalopin-string}
(and also used independently in~\cite{cit:mfcs}):
With every triangular face,
create a region that intersects the curves of vertices of the face in a predefined way 
and does not intersect anything else, specifically not any other such region. Following the notation
of~\cite{cit:mfcs}, we call this a~\emph{private region} (but we use a different shape).

\begin{definition}[Chair-shape]
A \emph{chair-shaped area} is a region bounded by a 10-sided orthogonal polygon
with CW (clockwise) or CCW (counter-clockwise) sequence  of interior angles
$90\degree$, $90\degree$, $270\degree$, $270\degree$, $90\degree$, $90\degree$, $90\degree$, $90\degree$,
$270\degree$, $90\degree$. See also Figure~\ref{fig:private-region}.
\end{definition}

\begin{definition}[Private region]
    Let $G$ be a planar graph with a partial $1$-string $B_2$-VPG representation $R$ and let
    $f$ be a facial triangle in $G$. A \emph{private region} of $f$ 
    is a chair-shaped area $\Phi$ inside $R$ such that:
    \begin{enumerate}
	\item $\Phi$ is intersected by no curves except for the ones representing vertices on $f$.
	\item All the intersections of $R$ are located outside of $\Phi$.
        \item For a suitable labeling of the vertices of $f$ as $\{a,b,c\}$, $\Phi$ is intersected by 
	two segments of $\bb{a}$ and one segment of $\bb{b}$ and $\bb{c}$. The intersections between
	these segments and $\Phi$ occur at the edges of $\Phi$ as depicted in Figure~\ref{fig:private-region}.
    \end{enumerate}
\end{definition}

\begin{figure}
    \centering
    \includegraphics[width=\textwidth]{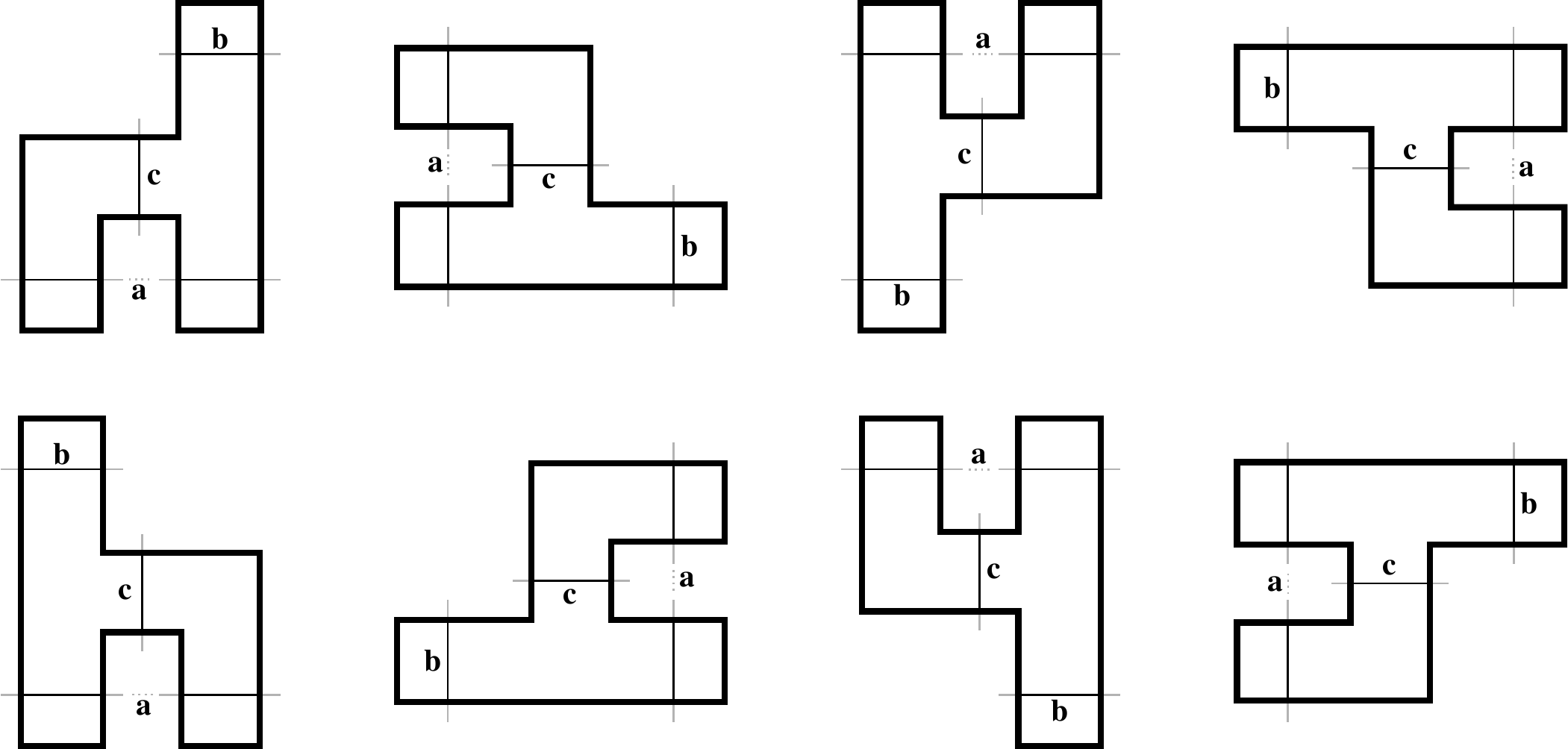}
    \caption{The chair-shaped private region of a triangle $a,b,c$ with possible rotations and flips. Note that labels of $a, b, c$ can be arbitrarily permuted---the curve intersecting the ``base'' of the does not need to be named $\bb{c}$.}
    \label{fig:private-region}
\end{figure}

\subsection{The Tangling Technique}
\label{sec:tangling}

Our constructions will frequently use the following
``tangling technique''. 
Consider a set of $k$ vertical downward rays $\bb{s_1}, \bb{s_2}, \bb{s_3}, \ldots, \bb{s_k}$ placed
beside each other in left to right order. The operation of \emph{bottom-tangling from $\bb{s_1}$ to $\bb{s_k}$ rightwards} 
stands for the following (see also Figure~\ref{fig:tangling}):

\begin{enumerate}
\item For $1 < i \leq k$, stretch $\bb{s_i}$ downwards so that it ends below $\bb{s_{i-1}}$.
\item For $1 \leq i < k$, bend $\bb{s_i}$ rightwards and stretch it so that it crosses $\bb{s_{i+1}}$, but
so that it does not cross $\bb{s_{i+2}}$.
\end{enumerate}

We similarly define right-tangling upwards, top-tangling leftwards and left-tangling downwards as rotation
of bottom-tangling rightwards by 90\degree, 180{\degree} and 270{\degree} CCW. We define
bottom-tangling leftwards as a horizontal flip of bottom-tangling rightwards, and
right-tangling downwards, top-tangling rightwards 
and left-tangling upwards as 90\degree, 180{\degree} and 270{\degree} CCW rotations of 
bottom-tangling leftwards.

\begin{figure}[ht]
    \centering
    \includegraphics[width=.6\textwidth]{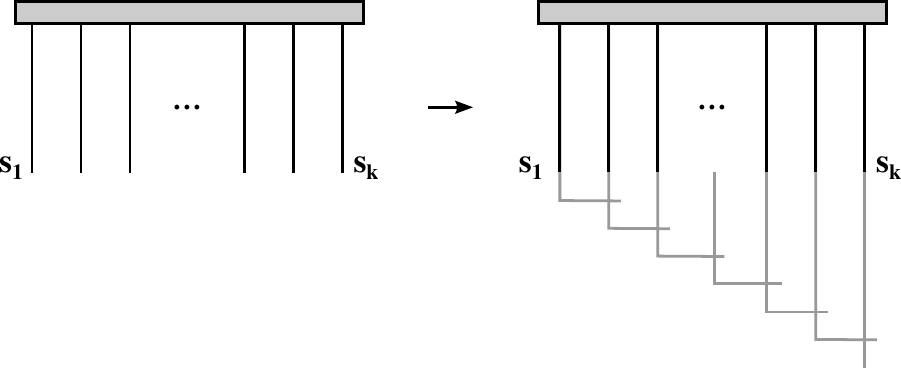}
    \caption{Bottom-tangling from $\bb{s_1}$ to $\bb{s_k}$ rightwards.}
    \label{fig:tangling}
\end{figure}

\section{$2$-Sided Constructions for W-Triangulations}
\label{sec:cz}

We first show the following lemma, which is the key result for
Theorem~\ref{thm:cz}, and will also be used as an ingredient
for the proof of Theorem~\ref{thm:main-claim}.
The reader is also referred to the appendix, where we give an example of a
(3-sided) construction for a graph, which in the recursive cases uses some of the
cases of the proof of Lemma~\ref{lem:2-sided}.

\begin{lemma}
\label{lem:2-sided}

    Let $G$ be a W-triangulation. Let $A, B, C$ be any three corners with respect to which $G$ satisfies the chord condition, and 
    let $F$ be a set of at most one 
	outer-face edge incident to $C$.
    Then $G$ has an \int{F} $1$-string $B_2$-VPG 
    representation with $2$-sided layout with respect to $A,B$.
    Furthermore, this representation has a chair-shaped private
    region for every interior face of~$G$. 
\end{lemma}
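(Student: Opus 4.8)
The plan is to prove Lemma~\ref{lem:2-sided} by induction on $|V(G)|$, following the Whitney/Asano--Saito--Kikuchi style case analysis, but carrying the geometric invariant (2-sided layout plus a chair-shaped private region per interior face) through every case. The base case is a single triangle $A,B,C$: here $G$ has only one (outer) face, so there is no interior face to serve, and the three 2-sided pictures in Figure~\ref{fig:base-case} (top row, for $F=\{(A,C)\}$, $F=\{(B,C)\}$, $F=\emptyset$) give the representation directly. For the inductive step I would distinguish cases according to the structure of the outer boundary, exactly as in the chord-condition Hamiltonicity proofs.

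\textbf{Case analysis for the inductive step.} First I would handle the case where $P_{AB}$ is a single edge $(A,B)$ (equivalently $r=2$): then $G-\{$that edge$\}$, or rather $G$ with a different choice of split, can be attacked by peeling off the vertex adjacent to the edge $AB$ on the inside, or by recursing on the graph obtained by deleting $A$ or by using $B$'s neighbours as a new path. More precisely, the standard moves are: (a) if there is a chord or the outer face is not a simple structure forcing recursion, split $G$ along a suitable separating path into two W-triangulations each satisfying the chord condition with inherited corners, recurse on each, and glue the two 2-sided layouts side by side (placing one bounding box to the left of the other and routing the shared-path curves straight across); (b) otherwise let $a_2$ be the neighbour of $A=a_1$ next along the outer face, consider the vertex $w$ that together with $a_1,a_2$ bounds the interior face on edge $(a_1,a_2)$, and recurse on $G$ minus $a_1$ (or minus the edge), then re-insert $\bb{a_1}$ as an L-shaped (one-bend) curve entering the top of $\Theta$ at the rightmost position, crossing exactly the curves of its outer neighbours, and using the tangling technique of Section~\ref{sec:tangling} to realize those crossings in the prescribed left-to-right order while creating the chair-shaped private region for the newly completed triangle $a_1 a_2 w$. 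Symmetric moves apply at the $B$ end and along $P_{BA}$. The chord condition is what guarantees that in each split both pieces again satisfy a chord condition with respect to the inherited corners, and that the vertex being peeled has all its not-yet-represented neighbours consecutive on the boundary.

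\textbf{Maintaining the private regions.} The second half of the statement is handled simultaneously: whenever a recursive call returns, its interior faces already carry chair-shaped private regions by the inductive hypothesis, and these are untouched by the gluing because gluing only extends curves outside the old bounding boxes. The only genuinely new interior faces are the triangles created when a peeled vertex's curve is inserted; for each such triangle I must carve out a chair-shaped region between the new curve and the two old curves it crosses. This is where the specific chair shape and the tangling construction pay off: tangling leaves a small ``pocket'' near each crossing that can be shaped into the 10-gon of the required angle sequence, intersected by two segments of one curve and one segment of each of the other two, with all actual intersection points kept outside the pocket (Figure~\ref{fig:private-region}). I would verify that the pocket can always be taken disjoint from $\Theta$'s boundary and from every other private region, since each lives in a bounded neighbourhood of its own triangle.

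\textbf{Main obstacle.} The hard part is bookkeeping the geometry so that after re-inserting a curve the layout is still \emph{2-sided}: the new outer-face vertex must cross the top (or bottom) of $\Theta$ in exactly the right position of the left-to-right order of $P_{AB}$ (resp.\ $P_{BA}$), it must have at most one bend, $\bb{A}$ and $\bb{B}$ must stay straight, and no spurious crossings may be introduced among boundary curves whose edges are not in $E'$. Making the tangling fit inside (or just outside) the existing bounding box, extending $\Theta$ when necessary, and checking that an edge of the outer face that is \emph{not} the special edge is genuinely not represented (so two boundary curves that happen to run near each other do not cross) is the delicate part, and it is exactly the reduction of this case bookkeeping --- borrowing the streamlined case distinction of \cite{cit:ham-cycle} --- that the paper advertises as its technical contribution. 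I expect the proof to proceed by enumerating a small number of configurations of the outer boundary (chord present vs.\ absent, $P_{AB}$ trivial vs.\ not, position of the first interior neighbour of a corner), and in each to exhibit the split or the peel together with the placement of the re-inserted curve and its private pockets.
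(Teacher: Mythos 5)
Your overall frame (induction on $n$, triangle as base case, split-or-peel, glue bounding boxes side by side, carry the private regions through because gluing only touches curves outside the old boxes) matches the paper, but the case analysis you propose is organized around the wrong corner and omits the one case that carries all the difficulty. The paper's induction is driven entirely by $C$, the corner carrying the special edge $F$: Case~1 peels $C$ when $\deg(C)=2$; Case~2 splits along a chord $(C,a_i)$ incident to $C$; Case~3 handles the remaining situation, $\deg(C)\ge 3$ with no chord at $C$. Your moves (a) ``split along a suitable separating path'' and (b) ``peel $a_1$, or the vertex inside the face on $(a_1,a_2)$'' do not cover Case~3: there is no chord to split along and no degree-2 corner to peel. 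Worse, peeling $A$ is not even geometrically legal, because in a 2-sided layout $\bb{A}$ must cross both the top and the bottom of $\Theta$ and therefore has \emph{zero} bends, whereas you re-insert it as a one-bend L-shape. Deleting $A$ (or an interior vertex adjacent to $(a_1,a_2)$) can also create chords on the new boundary paths, so the chord condition is not automatically inherited by the smaller graph, and you give no argument for it; the paper's peeling is safe only because the peeled vertex is a degree-2 corner $C$, so the new boundary gains just the single edge $(b_{s-1},c_2)$.

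The missing idea is the Whitney/Asano-style three-way split of Case~3: with $u_1,\dots,u_q$ the neighbours of $C$ in order from $b_{s-1}$ to $c_2$, let $u_j$ be the first having another neighbour on $P_{CA}$, and let $t_1,\dots,t_x$ be $u_j$'s neighbours on $P_{c_2A}$; then $G$ decomposes into $G_R$ (containing $P_{AB}$ and $u_1,\dots,u_j$), $G_Q$ (below $u_j$, possibly not 2-connected and hence assembled from blocks $G_1,\dots,G_{x-1}$ glued in a row, which is exactly Claim~\ref{claim}), and $G_0$ (between $C$ and $t_1$), each with carefully chosen corners for which the chord condition must be verified; the segment $\bb{C}$ and the bent curve $\bb{u_j}$ are then inserted by hand and the edges $u_iu_{i+1}$ are realized by tangling. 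A further sub-case ($j=1$, i.e.\ a chord from $P_{BC}$ to $P_{CA}$) needs yet another split, along such a chord farthest from $C$. Finally, you never specify which special edge each recursive call must be given --- e.g.\ $(C,a_i)$ for one side of a chord split, $(u_1,u_2)$ for $G_R$, $(u_{j+1},u_{j+2})$ for $G_0$ --- and the gluing scheme hinges entirely on these choices. Without these elements the induction does not go through.
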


%Note that the existence of private regions stipulated by Lemma~\ref{lem:2-sided} is irrelevant for Theorem~\ref{thm:cz},
%however, we will reuse this construction later in Section~\ref{sec:Wtriangulations} where private regions are needed.
We prove Lemma~\ref{lem:2-sided} by induction on the number of vertices. 

First, let us make an observation that will greatly help to reduce
the number of cases of the induction step.
Define $G\rev$ to be the graph obtained from graph $G$ by reversing
the combinatorial embedding, but keeping the same outer-face.  This effectively switches corners
$A$ and $B$, and replaces special edge $(C,c_2)$ by $(C,b_{s-1})$
and vice versa.
If $G$ satisfies the chord condition
with respect to corners $(A,B,C)$, then $G\rev$ satisfies the chord
condition with respect to corners $(B,A,C)$.  (With this new order, the corners
are CCW on the outer-face of $G\rev$, as required.)

Presume we have a 2-sided representation of
$G\rev$.  Then we can obtain a 2-sided representation of $G$ by flipping
the 2-sided one of $G\rev$ horizontally, i.e., along the $y$-axis.    
 Hence for all the following cases, we may (after
possibly applying the above flipping operation) make a restriction on
which edge the special edge is. 

Now we begin the induction. In the base case, $n=3$, so
$G$ is a triangle, and the three corners $A,B,C$ must be the three vertices 
of this triangle.  The desired \int{F} representations 
for all possible choices of $F$
are depicted in Figure~\ref{fig:base-case}.

The induction step for $n \geq 4$ is divided into three cases which we describe
in separate subsections.

\subsection{$C$ has degree $2$} 
\label{case:cz-c-degree-2-2-sided}
Since $G$ is a triangulated disk with $n \geq 4$, 
$(b_{s-1}, c_2)$ is an edge.   Define $G':=G-\{C\}$ and $F':=\{(b_{s-1},c_2)\}$.
We claim that $G'$ satisfies the chord condition for corners $A':=A,
B':=B$ and a suitable choice of $C'\in \{b_{s-1},c_2\}$, and argue this as
follows.  
If $c_2 = A$ or $c_2$ is incident to a chord that ends on $P_{BC}$ other than $(b_{s-1}, c_2)$ (thus $b_{s-1} \neq B$), then 
set $C':=b_{s-1}$. The chord condition holds for $G'$ as $b_{s-1}$ cannot be incident to a chord by planarity and the chord condition for $G$. Otherwise, $c_2$ is not incident to a chord that ends in an interior vertex of $P_{BC}$ other than $b_{s-1}$, so set $C':=c_2$; clearly
the chord condition holds for $G'$.  
Thus in either case, we can apply induction to $G'$.

To create a 2-sided representation of $G$, we use a 2-sided
\int{F'} representation $R'$ of~$G'$ constructed with respect to the aforementioned corners.
We introduce a new vertical curve $\bb{C}$ placed between 
$\bb{b_{s-1}}$ and $\bb{c_2}$ below $R'$. Add a bend at the upper
end of $\bb{C}$ and extend it leftwards or rightwards.   If the special
edge $e$ exists, then extend $\bb{C}$ until it hits the curve of the
other endpoint of $e$; else extend it only far enough to allow for the
creation of the private region. See also Figure~\ref{fig:case1-2sided}.

\begin{figure}[ht]
\includegraphics[width=\textwidth]{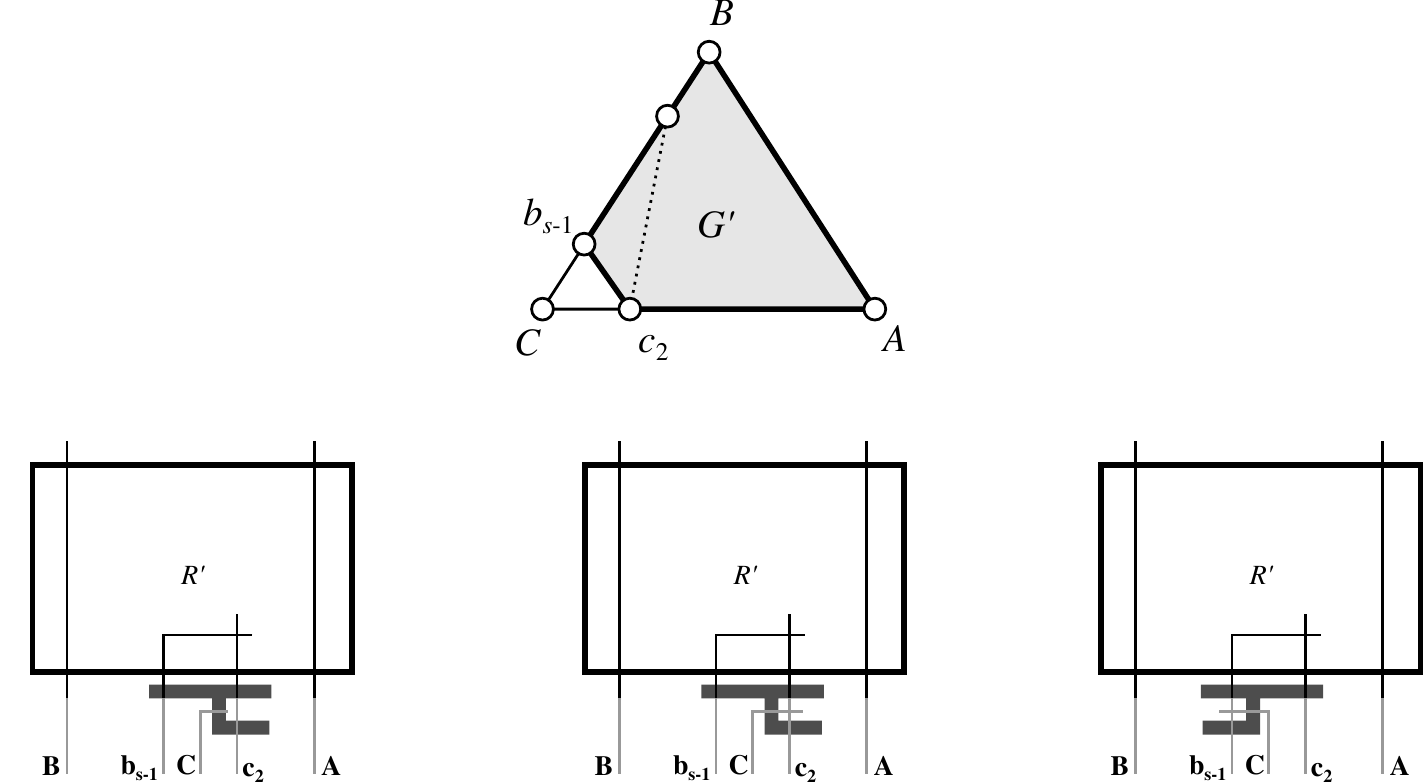}
\caption{Case 1: $2$-sided construction if $C$ has degree 2 and (left) $F = \{\}$, (middle) $F = \{(C,c_2)\}$ and (right) $F = \{(b_{s-1}, C)\}$.} 
\label{fig:case1-2sided}
\end{figure}

\subsection{$G$ has a chord incident to $C$}  
\label{case:cz-special}
We may (after applying the reversal trick)
assume that the special edge, if it exists, is $(C,b_{s-1})$.

By the chord condition, the chord incident to $C$ has 
the form $(C,a_i)$ for some $1 < i < r$.  
%Select
%the chord that minimizes $i$, i.e., is closest to $A$.
The graph $G$ can be split along the chord $(C,a_i)$ into two graphs $G_1$ and
$G_2$.  Both $G_1$ and $G_2$ are bounded by simple cycles, hence they are triangulated
disks.  No edges were added, so neither $G_1$ nor $G_2$
contains a separating triangle.   So both of them are W-triangulations.

We select $(C,A,a_i)$ as corners for $G_1$
and $(a_i,B,C)$ as corners for $G_2$ and can easily verify that
$G_1$ and $G_2$ satisfy the chord condition with respect to those corners: 
\begin{itemize}
\item $G_1$ has no chords on $P_{Aa_i}$ or $P_{CA}$ as they would violate the chord condition in $G$. 
There is no chord on $P_{a_iC}$ as it is a single edge.
\item $G_2$ has no chords on $P_{a_iB}$ or $P_{BC}$ as they would violate the chord condition in $G$. 
There is no chord on $P_{a_iC}$ as it is a single edge.
\end{itemize}

%So we can apply induction to both $G_1$ and $G_2$, obtain representations
%$R_1$ and $R_2$ for them, and combine them suitably.  
%
%
%
%{\medskip\noindent{\bf Case 2(a): $F =\emptyset$ or $F=\{(C,b_{s-1})\}$. }}
Inductively construct a 2-sided \int{(C,a_i)} representation 
$R_1$ of $G_1$
and a 2-sided  \int{F} representation $R_2$
of $G_2$, both with the aforementioned corners.
Note that $\bb[R_2]{C}$ and $\bb[R_2]{a_i}$ are
on the bottom side
of $R_2$ with $\bb[R_2]{C}$ to the left of~$\bb[R_2]{a_i}$.

Rotate $R_1$ by 180{\degree}, and translate it
so that it is below $R_2$ with $\bb[R_1]{a_i}$ in the same column as
$\bb[R_2]{a_i}$. Stretch $R_1$ and $R_2$ horizontally as needed until $\bb[R_1]{C}$ is 
in the same column as $\bb[R_2]{C}$. Then $\bb[R]{a_i}$ and $\bb[R]{C}$ for $R \in \{R_1,R_2\}$ 
can each be unified without adding bends by adding vertical segments.
The curves of outer-face vertices of $G$ then cross (after suitable lengthening)
the bounding box in the required order. See also Figure~\ref{fig:case-1-3-sided-CZ}.

\begin{figure}
    \centering
    \includegraphics[width=\textwidth]{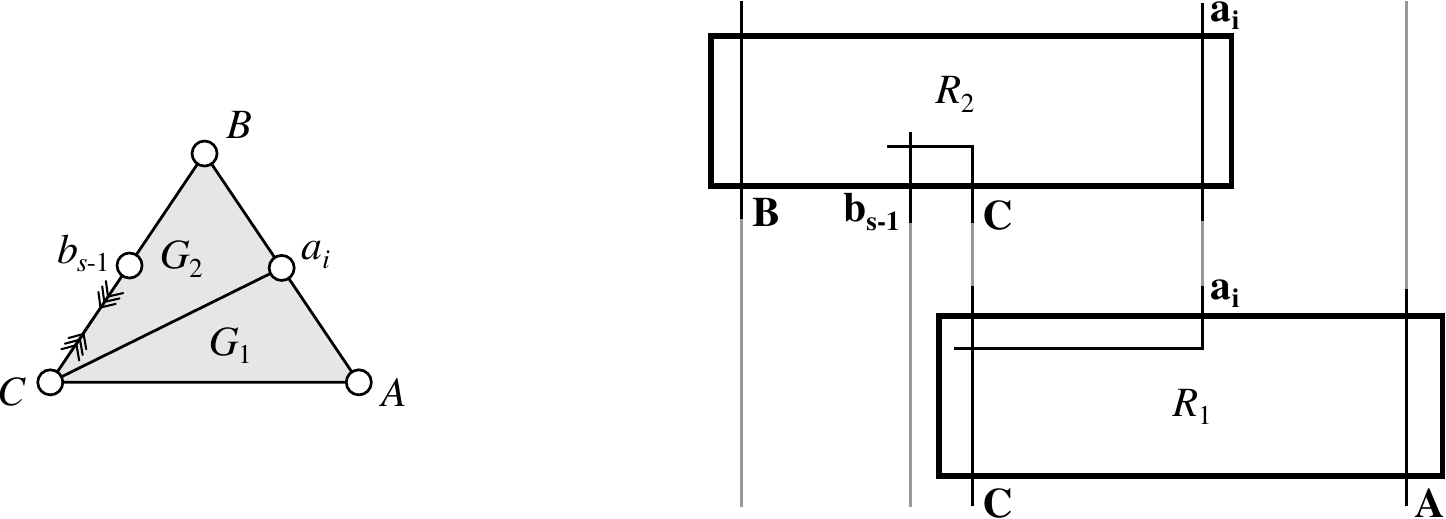}
    \caption{Case~2(a): Constructing an \int{(C,b_{s-1})} 
representation when $C$ is incident to a chord, in 2-sided (middle)
and 3-sided (right) layout. } 
    \label{fig:case-1-3-sided-CZ}
\end{figure}

Every interior face $f$ of $G$ is contained in $G_1$ or $G_2$ and hence has a private region in 
$R_1$ or $R_2$. As our construction does not make
any changes inside the bounding boxes of $R_1$ and $R_2$, the private region of $f$ 
is contained in $R$ as well. 

%\medskip{\noindent{\bf {Case 2(b): $F = \{(C,c_2)\}$. }}}
%We apply the reversal trick:  Construct 
%a 2-sided representation of $G\rev$ with suitable selection of corners (here Case 2(a) then applies) and 
%flip it horizontally.

\subsection{$G$ has no chords incident to $C$ and ${\mathit deg}(C) \geq 3$}
\label{case:cz-chordless}

We may (after applying the reversal trick)
assume that the special edge, if it exists, is $(C,c_2)$.

In this case we split $G$ in a more complicated fashion illustrated
in Figure~\ref{fig:case3-splitting}.
Let $u_1,\dots,u_q$ be the neighbours of vertex $C$ in clockwise order, starting
with $b_{s-1} = u_1$ and ending with $c_{2} = u_q$. We know that $q = \mathit{deg}(C) \geq 3$ and that 
$u_2,\dots,u_{q-1}$ are 
not on the outer-face, since $C$ is not incident to a chord.
Let $u_j$ be a neighbour of $C$ that has at least one neighbour other than $C$ on 
$P_{CA}$, and among all those, choose $j$ to be minimal. Such a
$j$ exists because $G$ is a triangulated disk and 
therefore $u_{q-1}$ is adjacent to both $C$ and $u_q$.   

\begin{figure}[ht]
\centering
\includegraphics[width=.49\textwidth]{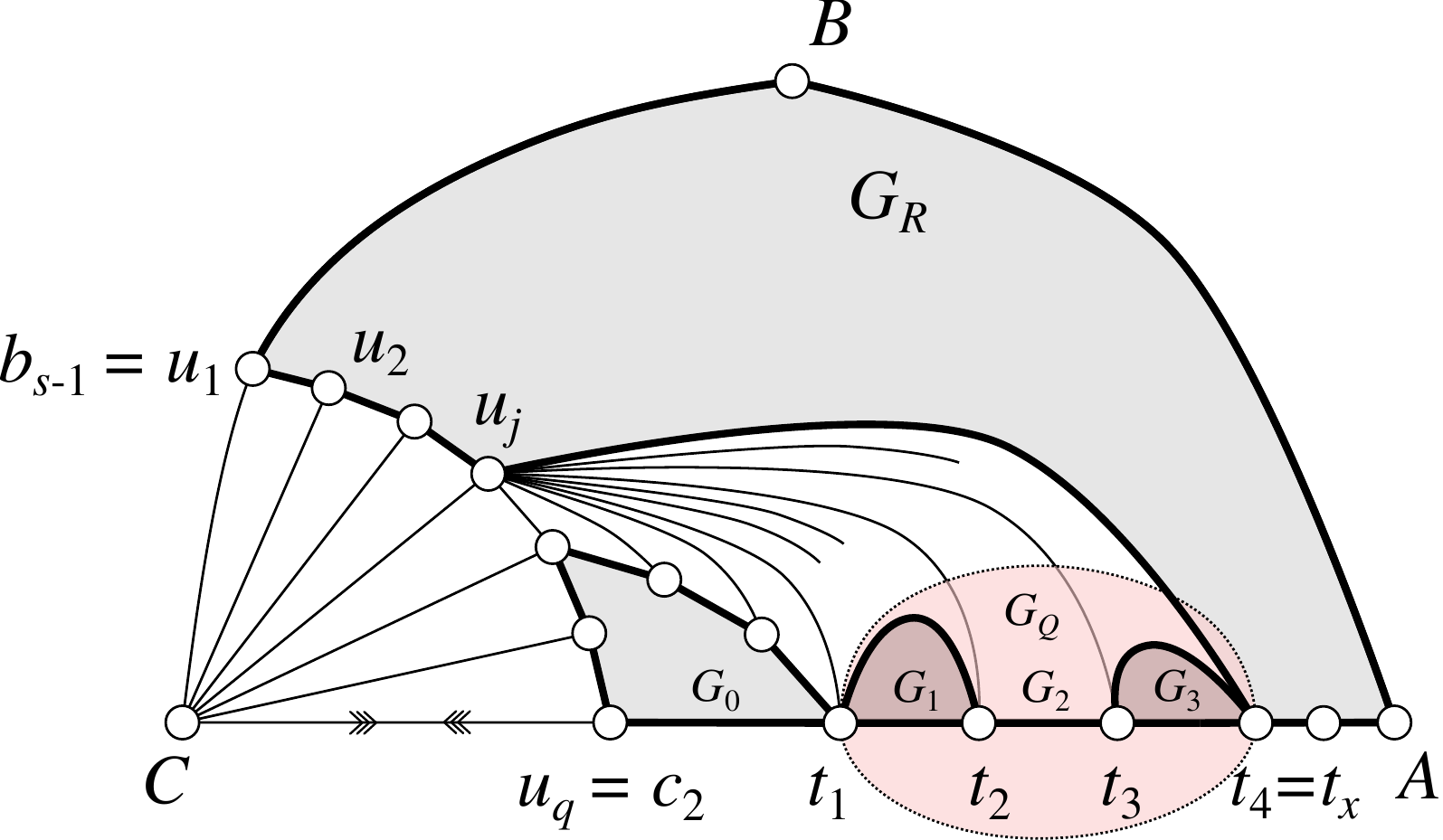}
\includegraphics[width=.49\textwidth]{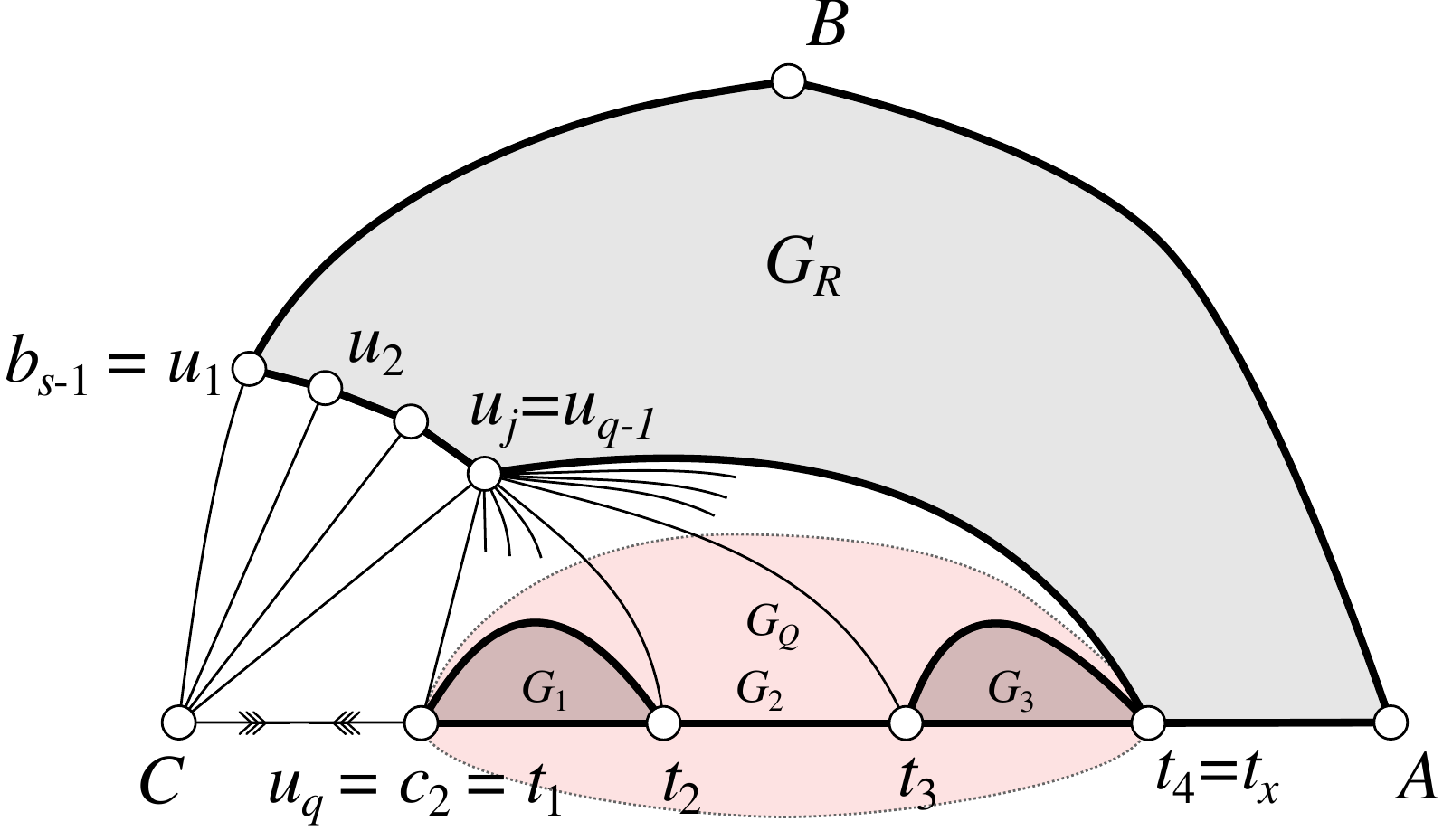}
\caption{Case 3(a): Splitting the graph when $deg(C) \geq 3$, no chord is 
incident to $C$, and $j>1$.
(Left) $j<q-1$; $G_0$ is non-trivial.  (Right) $j=q-1$; $G_0=\{c_2\}$.
} 
\label{fig:case3-splitting}
\end{figure}

%\TBcomment{Fig.\ref{fig:case3-splitting}: (1) indicate the %special edge,
%(2) on the right, add some ``edges to nowhere'' from $u_j$ %towards
%$G_1$ and $G_3$. similarly to the left.}

We distinguish two sub-cases.

\medskip{\noindent{\bf {Case 3(a): $j\neq 1$. }}}
Denote the neighbours of $u_j$ on $P_{c_2A}$ by $t_1,\dots,t_x$ in 
the order in which they appear on $P_{c_2A}$.
Separate $G$ into subgraphs as follows
(see also Figure~\ref{fig:case3-splitting}):

\begin{itemize}
\item The \emph{right} graph $G_R$ 
is bounded by $(A, \stackrel{P_{AB}}{\ldots}, B, \stackrel{P_{Bu_1}}{\ldots}, u_1,u_2,\dots,u_j, t_x, \stackrel{P_{t_xA}}{\ldots},A)$.
\item Let $G_B$ be the graph bounded by $(u_j, t_1, \stackrel{P_{t_1t_x}}{\ldots}, t_x,u_j)$. 
	We are chiefly interested in its subgraph $G_Q:= G_B-u_j$.
\item Let $G_L$ be the graph bounded by $(C, \stackrel{P_{Ct_1}}{\ldots},t_1,u_j,C)$.
	We are chiefly interested in its subgraph $G_0:=G_L-\{u_j,C\}$.
\end{itemize}

The idea is to obtain representations of these subgraphs and then to combine them suitably.  
We first explain how to obtain the representation $R_R$ used for $G_R$.
Clearly $G_R$ is a W-triangulation, since $u_2,\dots,u_j$ 
are interior vertices of $G$, and hence the outer-face of $G_R$ is a simple cycle.
Set $A_R:=A$ and $B_R:=B$.  If $B \neq u_1$ then set $C_R:=u_1$ and observe that
$G_R$ satisfies the chord
condition with respect to these corners: 
\begin{itemize}
\item $G_R$ does not have any 
chords with both ends on $P_{A_RB_R} = P_{AB}$, $P_{B_Ru_1} \subseteq P_{BC}$, or $P_{t_xA_R} \subseteq P_{CA}$
since $G$ satisfies the chord condition.
\item 
If there were any chords 
between a vertex in $u_1,\dots,u_j$ and a vertex on $P_{C_RA_R}$, then by $C_R=u_1$
the chord would either connect two neighbours of $C$ (hence giving a separating triangle of $G$),
or connect some $u_i$ for $i<j$ to $P_{CA}$ (contradicting the minimality of $j$),
or connect $u_j$ to some other vertex on $P_{t_xA}$ (contradicting that
$t_x$ is the last neighbour of $u_j$ on $P_{CA}$). Hence no such chord can exist either.
\end{itemize}
If $B = u_1$, then set $C_R := u_2$ (which exists by $q\geq 3$)
and similarly verify that it satisfies the chord condition as $P_{B_RC_R}$ is
the edge $(B, u_2)$. Since $C_R\in \{u_1,u_2\}$ in
both cases, we can apply induction on $G_R$ and obtain a $2$-sided
\int{(u_1, u_2)} representation $R_R$ with respect to the aforementioned corners.

Next we obtain a representation for the graph $G_0$, which is bounded by $u_{j+1},\dots,u_q,P_{c_2t_1}$
and the neighbours of $u_j$ in CCW order between $t_1$ and $u_{j+1}$.  We distinguish two cases:

\begin{enumerate}[(1)]
	\item $j=q-1$, and hence $t_1=u_q=c_2$ and $G_0$ consists of only $c_2$.
		In this case, the representation of $R_0$ consists of a single vertical line segment $\bb{c_2}$.
	\item $j<q-1$, so $G_0$ contains at least three vertices $u_{q-1},u_q$ and
		$t_1$.  Then $G_0$ is a W-triangulation since $C$ is not incident
		to a chord and by the choice of $t_1$.  Also, it satisfies the chord condition
		with respect to corners $A_0 := c_2, B_0 := t_1$ and $C_0 := u_{j+1}$ 
		since the three paths on its outer-face are sub-paths of $P_{CA}$ or contained
		in the neighbourhood of $C$ or $u_j$.  In this case,
		construct a 2-sided \int{(u_{j+1},u_{j+2})} representation $R_0$ of $G_0$ with
		respect to these corners inductively.
\end{enumerate}

Finally, we create a representation $R_Q$ of $G_Q$.   If $G_Q$ is a single vertex or
a single edge, then simply use vertical segments for the curves of its vertices (recall that there is no special edge here).  Otherwise, we can show:

\begin{claim}
\label{claim}
$G_Q$ has a $2$-sided \int{\emptyset} $1$-string $B_2$-VPG representation with respect to corners
$t_1$ and $t_x$.
\end{claim}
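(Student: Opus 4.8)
The plan is to show that $G_Q$ is itself a W-triangulation to which the chord condition applies with respect to a suitable choice of three corners, and then invoke Lemma~\ref{lem:2-sided} (the statement we are currently proving) inductively. Since $G_Q = G_B - u_j$ where $G_B$ is the triangulated disk bounded by $(u_j, t_1, \stackrel{P_{t_1 t_x}}{\ldots}, t_x, u_j)$, the outer boundary of $G_Q$ consists of the path $P_{t_1 t_x}$ (a sub-path of $P_{CA}$, hence of the outer-face of $G$) together with the neighbours of $u_j$ strictly between $t_x$ and $t_1$ in CCW order around $u_j$. First I would check that $G_Q$ has strictly fewer vertices than $G$ — this holds because $u_j$ and $C$ (and the vertices of $P_{AB}$, which are nonempty) are not in $G_Q$ — so the induction hypothesis is available. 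Next I would verify that $G_Q$ is a triangulated disk: its outer face is a simple cycle because $u_j$'s neighbours together with $P_{t_1 t_x}$ form a simple cycle (here one uses that $G$ has no separating triangle and that the $t_i$ are exactly the neighbours of $u_j$ on $P_{CA}$), and every interior face of $G_Q$ was an interior face of $G$, hence a triangle. It contains no separating triangle since $G$ does not and no edges were added, so $G_Q$ is a W-triangulation.

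The core of the argument is choosing the three corners. I would take $A_Q := t_1$, $B_Q := t_x$, and let $C_Q$ be the neighbour of $u_j$ immediately following $t_x$ in CCW order around $u_j$ (this is the first internal vertex of the path from $t_x$ back to $t_1$ along $\partial G_Q$ not on $P_{CA}$; if $t_x$ and $t_1$ are adjacent on $\partial G_Q$ via a single such vertex, or if the boundary portion between them degenerates, one handles it as in the earlier small-case discussions, possibly with $G_Q$ a single edge, already excluded). With these corners, $P_{A_Q B_Q} = P_{t_1 t_x}$ lies on $P_{CA}$, so any chord inside it would be a chord of $G$ on $P_{CA}$, contradicting the chord condition for $G$. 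The path $P_{B_Q C_Q}$ is a single edge $(t_x, C_Q)$, so it admits no chord. The path $P_{C_Q A_Q}$ consists of neighbours of $u_j$; a chord on it would connect two neighbours of $u_j$, creating a separating triangle with $u_j$ in $G$ — impossible. Hence $G_Q$ satisfies the chord condition with respect to $(A_Q, B_Q, C_Q)$, and applying Lemma~\ref{lem:2-sided} with $F = \emptyset$ yields the desired 2-sided \int{\emptyset} representation. Since the outer-face path $P_{t_1 t_x}$ is $P_{A_Q B_Q}$ read from $A_Q$ to $B_Q$, this representation is 2-sided with respect to the corners $t_1, t_x$ as claimed, and it comes equipped with a chair-shaped private region for each interior face.

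The main obstacle I anticipate is the bookkeeping around degenerate sub-cases: exactly which vertex plays the role of $C_Q$, and what happens when $x$ is small (so $P_{t_1 t_x}$ is short) or when $u_j$ has only one neighbour strictly between $t_x$ and $t_1$. One must confirm in each such situation either that $G_Q$ reduces to a single vertex or edge (handled separately in the surrounding text) or that a legitimate triple of distinct corners in CCW order can still be named; the danger is that $C_Q$ coincides with $t_1$ or $t_x$, or that the ``third side'' is empty. I would dispatch this exactly as the paper does for $G_R$ in Case 3(a): when the natural choice of $C_Q$ collides with $B_Q$, slide it to the next neighbour of $u_j$, using that the relevant side then becomes a single edge and the chord condition is automatic. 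Everything else — planarity of the split, absence of separating triangles, the face-by-face inheritance of private regions — is routine and parallels the verifications already carried out for $G_R$ and $G_0$.
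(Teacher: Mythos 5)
Your proposal has a genuine gap at its first step: you assert that $G_Q$ is a W-triangulation and then apply Lemma~\ref{lem:2-sided} to it directly, but $G_Q$ is in general not even $2$-connected, so its outer face is not a simple cycle and the lemma does not apply to it. (The paper's own proof opens by flagging exactly this obstacle.) The trouble is that the vertices $t_2,\dots,t_{x-1}$ are simultaneously neighbours of $u_j$ and vertices of $P_{CA}$, so after deleting $u_j$ from $G_B$ each of them occurs twice on the boundary walk of $G_Q$ and is a cut vertex. A concrete failure: if $x=3$ and $G_B$ consists of the two triangular faces $(u_j,t_1,t_2)$ and $(u_j,t_2,t_3)$, then $G_Q$ is the path $t_1,t_2,t_3$. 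Your justification that ``$u_j$'s neighbours together with $P_{t_1t_x}$ form a simple cycle'' implicitly assumes $x=2$; for $x\ge 3$ it is false, and no choice of three corners rescues a direct application of the lemma. Note that this is also why the claim is stated only for a $2$-sided layout with respect to the two vertices $t_1,t_x$, and why the definition of a $2$-sided layout explicitly tolerates cut vertices.

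The repair --- and the paper's actual argument --- is to decompose $G_Q$ into its blocks: for $i=1,\dots,x-1$ let $G_i$ be the subgraph bounded by $P_{t_it_{i+1}}$ together with the neighbours of $u_j$ between $t_i$ and $t_{i+1}$. Each $G_i$ is either a single edge or a W-triangulation bounded by a simple cycle (since $u_j$ has no neighbour on $P_{CA}$ strictly between $t_i$ and $t_{i+1}$), and it satisfies the chord condition with corners $B_i:=t_i$, $A_i:=t_{i+1}$ and $C_i$ a third vertex of $P_{t_it_{i+1}}$, which exists because $(t_i,t_{i+1},u_j)$ is not a separating triangle. Induction is applied to each block separately, and the resulting $2$-sided representations are concatenated horizontally by aligning and joining the vertical end segments $\bb{t_i}$ of consecutive pieces. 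Your chord-condition checks for the individual sides are essentially the right ones (sub-paths of $P_{CA}$ inherit the condition from $G$, and a chord joining two neighbours of $u_j$ would create a separating triangle), but they must be carried out per block, not for $G_Q$ as a whole.
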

\begin{proof}
$G_Q$ is not necessarily 2-connected, so we cannot apply induction directly.  
Instead we break it into $x-1$ graphs $G_1,\dots,G_{x-1}$, where for $i=1,\dots,x-1$ graph
$G_i$ is
bounded by $P_{t_it_{i+1}}$ as well as the neighbours of $u_j$ between
$t_i$ and $t_{i+1}$ in CCW order.  Note that $G_i$ is either a single
edge, or it is bounded by a simple cycle since $u_j$ has no neighbours on
$P_{CA}$ between $t_i$ and $t_{i+1}$.  In the latter case, use $B_i:=t_i$, $A_i:=t_{i+1}$,
and $C_i$ an arbitrary third vertex on $P_{t_it_{i+1}} \subseteq P_{CA}$,
which exists since the outer-face of $G_i$ is a simple cycle and $(t_i, t_{i+1}, u_j)$ is not
a separating triangle. 
Observe
that $G_i$ satisfies the chord condition since all paths on the outer-face of $G_i$
are either part of $P_{CA}$ or in the neighbourhood of $u_j$.  Hence by induction
there exists a 2-sided \int{\emptyset} representation $R_i$ of $G_i$ with respect to the corners of $G_i$.  If $G_i$
is a single edge $(t_i,t_{i+1})$, then let $R_i$ consists of two vertical segments
$\bb{t_i}$ and $\bb{t_{i+1}}$.

Since each representation $R_i$ has at its leftmost end a vertical segment $\bb{t_i}$
and at its rightmost end a vertical segment $\bb{t_{i+1}}$, we can combine all
these representations by aligning $\bb[R_i]{t_i}$ and $\bb[R_{i+1}]{t_i}$ horizontally
and filling in the missing segment.  See also Figure~\ref{fig:chain_blocks}.
One easily verifies that the result is a
2-sided \int{\emptyset} representation of $G_Q$.
\end{proof}

\begin{figure}[ht]
    \centering
    \raisebox{-0.5\height}{\includegraphics[width=.35\textwidth]{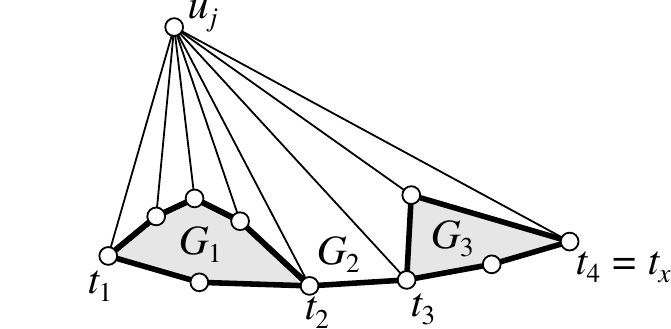}}
\raisebox{-0.5\height}{~~~$\rightarrow$~~~}
    \raisebox{-0.5\height}{\includegraphics[width=.3\textwidth]{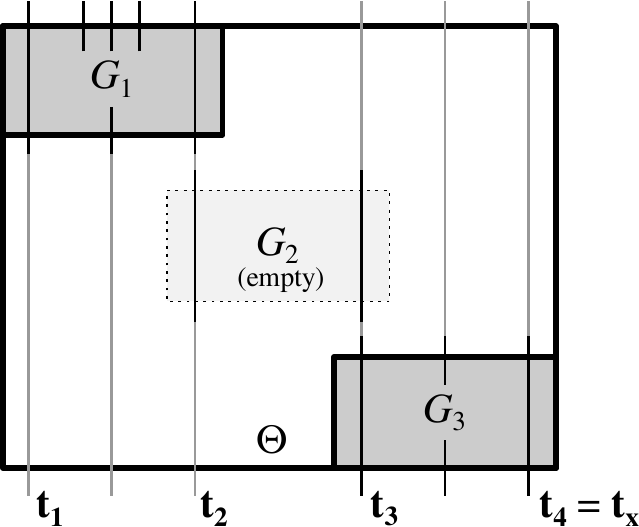}}
    \caption{Left: Graph $G_B$.
        The boundary of $G_Q$ is shown bold. 
        Right: Merging 2-sided \int{\emptyset} representations 
        of $G_i, 1 \leq i \leq 3$, into a 2-sided \int{\emptyset} representation
        of $G_Q$.}
    \label{fig:chain-representation}
    \label{fig:chain_blocks}
\end{figure}

We now explain how to combine these three representations $R_R$, $R_Q$ and $R_0$;
see also Figure~\ref{fig:case-3-CZ}.  
Translate $R_Q$ so that it is below $R_R$ with $\bb[R_R]{t_x}$ and $\bb[R_Q]{t_x}$ in the
same column; then connect these two curves with a vertical segment.
Rotate $R_0$ by 180\degree~and translate it so that it is below $R_R$ and 
to the left and above $R_Q$, and $\bb[R_0]{t_1}$ and $\bb[R_Q]{t_1}$ are in the same column;
then connect these two curves with a vertical segment.
Notice that the vertical segments of $\bb[R_R]{u_2},\dots,\bb[R_R]{u_{j}}$ are at the bottom left of $R_R$.
Horizontally stretch $R_0$ and/or $R_R$ so that $\bb[R_R]{u_2},\dots,\bb[R_R]{u_{j}}$ are to
the left of the vertical segment of $\bb[R_0]{u_{j+1}}$,
but to the right (if $j<q-1$) of the vertical segment of $\bb[R_0]{u_{j+2}}$.
There are such segments by $j > 1$.

Introduce a new horizontal segment $\bb{C}$ and place it so that it
intersects curves $\bb{u_q},\dots,\bb{u_{j+2}},\allowbreak\bb{u_2}, \dots, \bb{u_j},\bb{u_{j+1}}$ (after lengthening them, 
if needed).  Attach a vertical segment to $\bb{C}$. 
If $j<q-1$, then top-tangle $\bb{u_q},\dots,\bb{u_{j+2}}$ rightwards.
(Recall from Section~\ref{sec:tangling} that this creates intersections
among all these curves.)
Bottom-tangle $\bb{u_2},\dots,\bb{u_j}$ rightwards.  
The construction hence creates intersections for all edges in the path $u_1,\dots,u_q$,
except for $(u_{j+2},u_{j+1})$ (which was represented in $R_0$) and
$(u_2,u_1)$ (which was represented in $R_R$).  

Bend and stretch $\bb[R_R]{u_j}$ 
rightwards so that it crosses the curves of all its neighbours in $G_0\cup G_Q$.
Finally, consider the path between the neighbours of $u_j$ CCW from $u_{j+1}$ to $t_x$. 
Top-tangle curves of these vertices rightwards, but omit the intersection
if the edge is on the outer-face (see e.g.~$(t_2,t_3)$ in 
Figure~\ref{fig:case-3-CZ}).
%Create intersections for any edge on this path that is interior in $G$
%by top-tangling their curves rightwards.

One verifies that the curves intersect the bounding boxes as desired.
The constructed representations contain private regions for all interior faces of $G_R$,
$G_Q$ and $G_0$ by induction. The remaining faces are of the form $(C,u_i,u_{i+1}), 1 \leq i < q$,
and $(u_j, w_k, w_{k+1})$ where $w_{k}$ and $w_{k+1}$ are two consecutive neighbours of $u_j$
on the outer-face of $G_0$ or $G_Q$.
Private regions for those faces are shown in 
Figure~\ref{fig:case-3-CZ}.

\begin{figure}[ht]
\centering
\includegraphics[width=.99\textwidth, trim={0 0 0 13.8cm},clip]{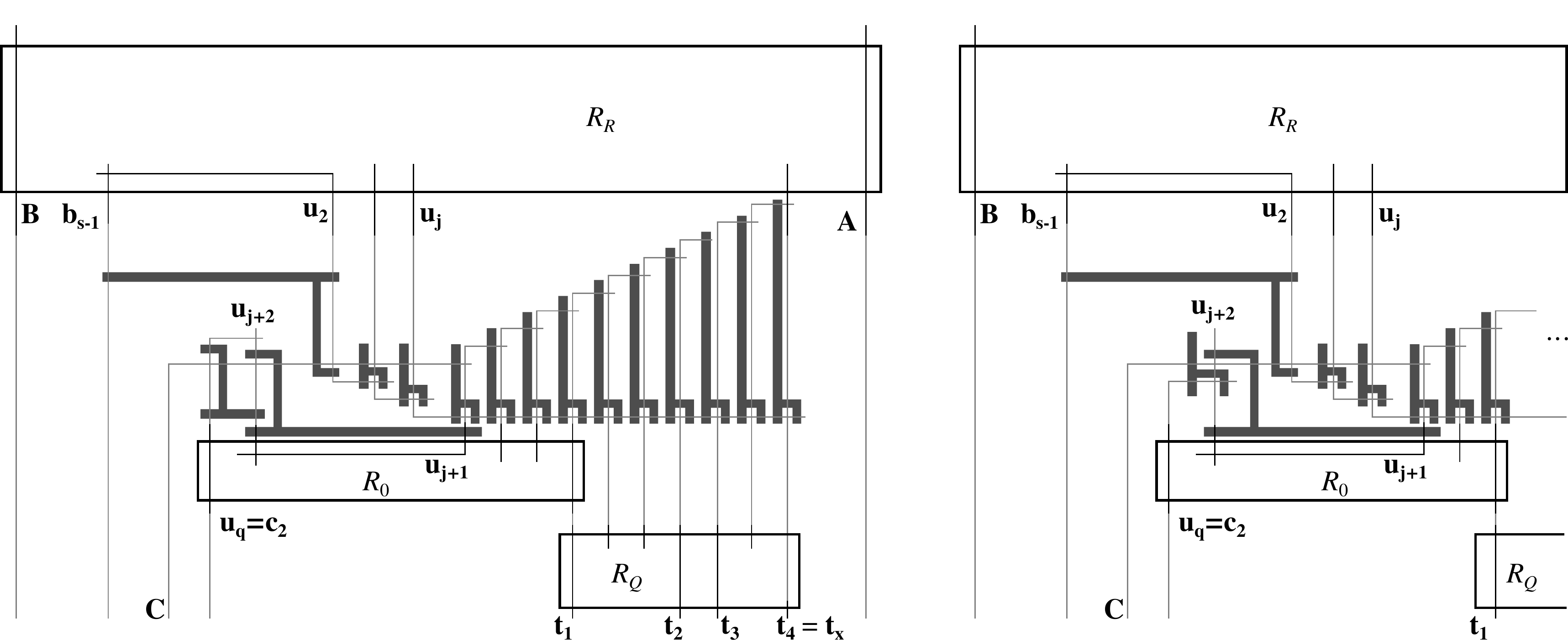}
\caption{Combining subgraphs in Case 3(a).
%(Top left) 3-sided and (top right) reverse 3-sided construction.  (Bottom) 
2-sided 
construction, for $F=\{(C,c_2)\}$ and $F=\emptyset$.
The construction matches the graph depicted in Figure~\ref{fig:case3-splitting} left.
}
\label{fig:case-3-CZ}
\end{figure}

\medskip{\noindent{\bf {Case 3(b): $j=1$, i.e., there exists a chord $(b_{s-1},c_i)$. }}}
In this case we cannot use the above construction directly since we 
need to bend $\bb{u_j}=\bb{u_1}=\bb{b_{s-1}}$ horizontally rightwards to create 
intersections, but then it no longer extends vertically downwards as required
for $\bb{b_{s-1}}$.   Instead we use a different construction.

Edge $(b_{s-1},c_i)$ is a chord from $P_{BC}$ to $P_{CA}$.  Let $(b_{k},c_{\ell})$
be a chord from $P_{BC}$ to $P_{CA}$ that maximizes $k-\ell$, i.e., is
furthest from $C$ (our construction in this case
actually works for any chord from $P_{BC}$ to $P_{CA}$---it is not necessary that $k = s-1$). Note that possibly $\ell=t$ (i.e., the chord is incident
to $A$) or $k=1$ (i.e., the chord is incident to $B$), but not both by the
chord condition.   We assume here that $\ell<t$, the other case is symmetric.

\begin{figure}[ht]
\begin{center}
\includegraphics[width=.25\textwidth]{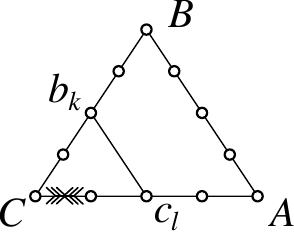}\hspace{2em}
\includegraphics[width=.25\textwidth]{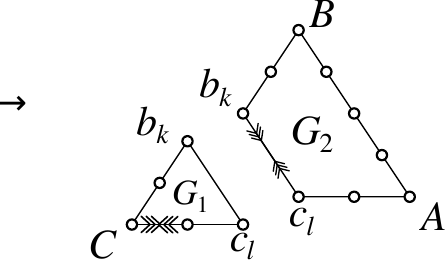}\hspace{3em}
\includegraphics[width=.25\textwidth]{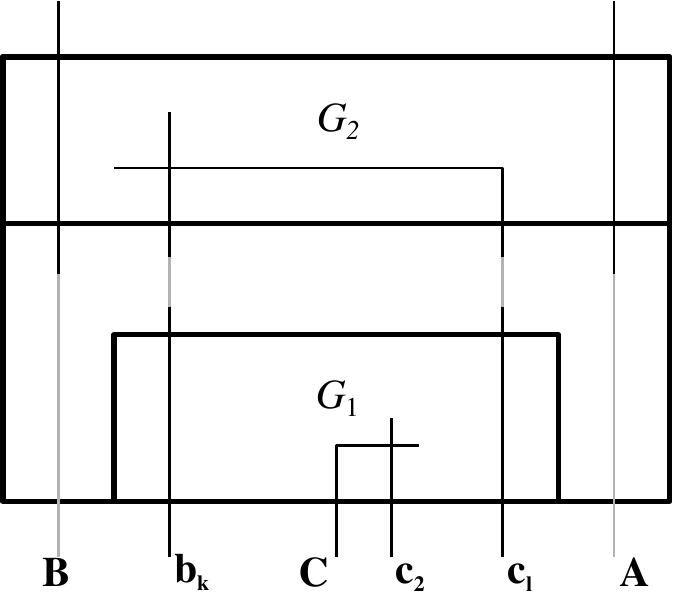}
\end{center}
\caption{Case~3(b): Construction of a 2-sided \int{(C,c_2)} representation of $G$
    with a chord $(b_k,c_\ell)$.}
\label{fig:cz-simpler}
\end{figure}

%\TBcomment{Fig~\ref{fig:cz-simpler}: the rightmost picture still %has
%$k'$ and $l'$; please delete the primes.}

In order to construct a 2-sided \int{F} representation of $G$, split the graph 
along $(b_{k},c_{\ell})$ into two W-triangulations $G_1$ (which includes $C$
and the special edge, if any) and
$G_2$ (which includes $A$).  Set
 $(A,B,c_{\ell})$ as corners for $G_1$ (these are three distinct vertices
by $c_\ell\neq A$) and set $(c_{\ell},b_{k},C)$ as corners for $G_2$
and verify the chord condition:

\begin{itemize}
    \item $G_1$ has no chords on either $P_{Cc_{\ell}} \subseteq P_{CA}$ or $P_{b_{k}C} \subseteq P_{BC}$
        as they would contradict the chord condition in $G$.
        The third side is a single edge $(b_{k},c_{\ell})$ and so it does not have any chords either. 
    \item $G_2$ has no chords on either $P_{c_{\ell}A} \subseteq P_{CA}$ or $P_{AB}$ as they
    would violate the chord condition in $G$. It does not have any chords on the path $P_{Bc_{\ell}}$ due to 
    the selection of the chord $(b_{k},c_{\ell})$ and by the chord condition in $G$ and by the chord condition in $G$. 
\end{itemize}

Thus, by induction, $G_1$ has a 2-sided \int{F} representation $R_1$
and $G_2$ has a 2-sided \int{(b_{k},c_{\ell})} representation $R_2$ with respect to the aforementioned corners. 
Translate and horizontally stretch $R_1$ and/or $R_2$ so that
$\bb{b_k^{R_1}}$ and $\bb{c_{\ell}^{R_1}}$ are aligned with
$\bb{b_k^{R_2}}$ and $\bb{c_{\ell}^{R_2}}$, respectively, and connect
each pair of curves with a vertical segment.
Since $\bb{b_{k}^{R_1}}$ and $\bb{c_{\ell}^{R_1}}$ have no bends, this 
does not increase the number of bends on any curve and produces 
a 2-sided \int{F} representation of $G$. 
All the faces in $G$ have a private region inside one of the representations of $G_1$ or $G_2$.

\bigskip
This ends the description of the construction in all cases, and
hence proves Lemma~\ref{lem:2-sided}. We now show how Lemma~\ref{lem:2-sided} implies Theorem~\ref{thm:cz}:

\begin{proof}[Proof of Theorem~\ref{thm:cz}]
Let $G$ be a 4-connected planar graph.  Assume first that $G$ is triangulated,
which means that it is a $W$-triangulation.  Let $(A,B,C)$ be the outer-face
vertices and start with an \int{(B,C)}-representation of $G$ (with respect to corners $(A,B,C)$) that exists by
Lemma~\ref{lem:2-sided}.
The intersections of the other two outer-face edges $(A,C)$ and $(A,B)$ 
can be created by tangling $\bb{B},\bb{A}$ and
$\bb{C},\bb{A}$ suitably (see Figure~\ref{fig:completion-CZ}).

Theorem~\ref{thm:cz} also stipulates that every curve used in a representation has at most one vertical segment. This is true for all curves added during the construction. Furthermore, we join two copies of a curve only by aligning
and connecting their vertical ends, so all curves have at most one
vertical segment. 

This proves Theorem~\ref{thm:cz} for 4-connected triangulations.  To handle
an arbitrary 4-connected planar graph, \emph{stellate} the graph, i.e.,
insert into each non-triangular face $f$
a new vertex $v$ and connect it to all vertices on $f$.
By 4-connectivity this creates no
separating triangle and the graph is triangulated afterwards.  Finding a
representation of the resulting graph and deleting the curves of all added
vertices yields the result.
\end{proof}

\begin{figure}[ht]
\centering
\includegraphics[width=.30\textwidth]{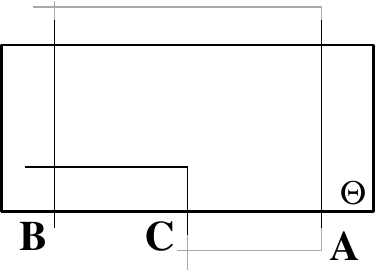}
\caption{Completing a 2-sided \int{(B,C)} representation 
by adding intersections for $(A,B)$ and $(A,C)$.}
\label{fig:completion-CZ}
\end{figure}

\section{3-Sided Constructions for W-Triangulations}
\label{sec:Wtriangulations}
%%%%%%%%%%%%%%%%%%%%%%%%%%%%%%%
%%%%%%%%%%%%%%%%%%%%%%%%%%%%%%% Proof starts here
%%%%%%%%%%%%%%%%%%%%%%%%%%%%%%%

Our key tool for proving Theorem~\ref{thm:main-claim} is the following lemma:

\begin{lemma}
\label{lem:3-sided}
    Let $G$ be a W-triangulation and let $A,B,C$ be any three corners with respect to which $G$ satisfies the chord condition. For any $e \in \{(C,b_{s-1}), (C, c_2)\}$,
    $G$ has an \int{e} $1$-string $B_2$-VPG representation with $3$-sided layout 
    and an \int{e} $1$-string $B_2$-VPG representation with reverse $3$-sided layout.
    Both representations have a chair-shaped private region for every 
	interior face. 
\end{lemma}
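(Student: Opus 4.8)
The plan is to mimic the structure of the proof of Lemma~\ref{lem:2-sided}: induct on the number of vertices, with the triangle as the base case (the required 3-sided and reverse 3-sided representations for a triangle and the two choices of $F=\{e\}$ are shown in Figure~\ref{fig:base-case}), and split into the same three cases in the induction step: (1) $C$ has degree $2$; (2) $G$ has a chord incident to $C$; (3) $C$ has no incident chord and $\deg(C)\ge 3$. As in Lemma~\ref{lem:2-sided}, I would first record a reversal observation: passing from $G$ to $G\rev$ swaps $A$ with $B$ and swaps the special edge $(C,c_2)$ with $(C,b_{s-1})$; here, however, a horizontal flip no longer preserves a 3-sided layout, so instead the flip should exchange the 3-sided layout with the reverse 3-sided layout (in the reverse layout $A$ carries the bend and $B$ is straight, which is exactly what swapping $A,B$ does). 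Thus it suffices to build, for each $G$, a 3-sided representation for one choice of $e$, and the remaining three combinations (reverse layout, other choice of $e$) follow by applying reversal and/or by the fact that we prove \emph{both} layout types together in each case.

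For the recursive cases, the key point is that every subgraph produced by the split of Lemma~\ref{lem:2-sided} already comes with corners satisfying the chord condition, so I can recursively ask for whichever of the three layout types (2-sided, 3-sided, reverse 3-sided) is geometrically convenient. In Case~1 ($\deg(C)=2$), I delete $C$, get $G'=G-C$ with corners $A,B,C'\in\{b_{s-1},c_2\}$ satisfying the chord condition, recursively obtain its \int{F'} representation, and then reattach $\bb{C}$ as a vertical segment just below the box with a single bend; placing $\bb{C}$ so it becomes the bottommost curve on the left side (if $e=(C,c_2)$) or the leftmost on the bottom (if $e=(C,b_{s-1})$) realizes clause 1(iv). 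To get the 3-sided layout of $G$ I feed forward a 3-sided layout of $G'$; for the reverse 3-sided layout of $G$ I feed forward a reverse 3-sided layout of $G'$; the local surgery near $C$ is identical. In Case~2, split along the chord $(C,a_i)$ into $G_1$ (corners $C,A,a_i$) and $G_2$ (corners $a_i,B,C$); recursively take $R_1$ as a \emph{2-sided} \int{(C,a_i)} representation via Lemma~\ref{lem:2-sided} and $R_2$ with the desired 3-sided or reverse 3-sided layout, rotate and stack as in Figure~\ref{fig:case-1-3-sided-CZ}, unifying $\bb{a_i}$ and $\bb{C}$ along shared columns; the outer-face curves then cross $\Theta$ in the order demanded by the 3-sided (resp. reverse 3-sided) definition, because $R_2$ already supplies the top/right side and $R_1$ contributes the part of $P_{CA}$ that must appear on the bottom.

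Case~3 ($\deg(C)\ge 3$, no chord at $C$) is the main obstacle, exactly as in Lemma~\ref{lem:2-sided}. I reuse the same decomposition into $G_R$, $G_Q=G_B-u_j$, and $G_0=G_L-\{u_j,C\}$, with $u_j$ the first neighbour of $C$ (clockwise from $b_{s-1}$) having another neighbour on $P_{CA}$, and the two sub-cases $j\ne1$ and $j=1$. For $j\ne 1$: $R_R$ is obtained recursively with the target layout type (3-sided or reverse 3-sided) for corners $A,B,C_R\in\{u_1,u_2\}$, while $R_Q$ (via Claim~\ref{claim}, whose proof is layout-agnostic) and $R_0$ are taken 2-sided; these are stacked as in Figure~\ref{fig:case-3-CZ}, $\bb{C}$ is introduced as a horizontal segment with one vertical arm, and the path $u_1,\dots,u_q$ is completed by top-tangling and bottom-tangling exactly as before. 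The only thing to recheck is that $\bb{C}$ lands on the correct side of $\Theta$ (bottommost on the left, or leftmost on the bottom) according to $e$, and that $\bb{A}$ remains straight (for 3-sided) or bent (for reverse 3-sided); since $\bb{A}$ sits inside $R_R$, this is inherited from the recursive call. For $j=1$ there is a chord $(b_{s-1},c_i)$, so we are really back in a chord situation: pick the chord $(b_k,c_\ell)$ from $P_{BC}$ to $P_{CA}$ furthest from $C$, split into $G_1$ (with $C$ and the special edge) and $G_2$ (with $A$), take $G_1$ in the target layout type recursively and $G_2$ 2-sided, and glue along $\bb{b_k},\bb{c_\ell}$ as in Figure~\ref{fig:cz-simpler}. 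In every case, interior faces of the pieces keep their chair-shaped private regions because no surgery happens inside the recursive bounding boxes, and the newly created faces incident to $C$ or $u_j$ receive private regions drawn explicitly in the figures — so the main work is bookkeeping the side-orderings and the positions of $\bb{A},\bb{B},\bb{C}$ on $\Theta$ for each of the two layout types, rather than any new combinatorial idea.
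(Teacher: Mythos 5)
Your overall skeleton (induction, base case, the three cases, the reversal trick exchanging the two layout types and the two special edges) matches the paper, but you have papered over exactly the places where the 3-sided setting genuinely diverges from Lemma~\ref{lem:2-sided}. The main gap is in Case~2. The ``rotate $R_1$ and stack it under a 3-sided $R_2$'' merge only works for one of the two special edges (the paper's Case~2(a), $e=(C,b_{s-1})$): with $e=(C,c_2)$ the layout demands that $\bb{C}$ leave $\Theta$ as the bottommost curve on the \emph{left} side while $P_{c_2A}$ occupies the bottom and $P_{Aa_i}\subseteq P_{AB}$ still reaches the top, and no placement of a 2-sided $R_1$ against a 3-sided $R_2$ achieves this without giving extra bends to outer-face curves. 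Nor can you escape via reversal: one directly-constructed (layout, edge) pair yields only two of the four required combinations, so at least one ``bad'' pairing must be built explicitly. The paper resolves this with Case~2(b), which requires choosing the chord $(C,a_i)$ \emph{closest to $A$} (so that $G_Q=G_1-C$ is a W-triangulation), peeling off the neighbours of $C$ in $G_1$ and tangling them, and a further sub-case $A=c_2$ that needs a reverse 3-sided representation of $G_2$ with permuted corners $(a_i,B,C)\mapsto(C_2,A_2,B_2)$. None of this appears in your plan, and it is the technical heart of the lemma.

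Two smaller but real problems. In Case~3(b) you propose to port the 2-sided chord-splitting construction (split along the chord $(b_k,c_\ell)$ furthest from $C$, glue side by side); this fails here because the piece containing $A$ must then contribute to three different sides of $\Theta$ (top, left, bottom), which a side-by-side gluing of a 2-sided box cannot deliver. The paper instead redefines the splitting vertex as $u_{j'}$ with $j'>1$ minimal, reuses the Case~3(a) construction, and must re-verify the chord condition for $G_R$ with corner $C_R=u_2$ (in particular, that the chord $(u_1,c_i)$ is no longer a violation). Finally, in Case~1 you give the reattached $\bb{C}$ a bend; condition~3 of the 3-sided layout requires $\bb{C}$ to have \emph{no} bends (this is later exploited when $\bb{C}$-curves of two subrepresentations are unified), so $\bb{C}$ must be inserted as a single straight segment crossing $\bb{c_2}$ or $\bb{b_{s-1}}$ as appropriate.
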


The proof of Lemma~\ref{lem:3-sided} will use induction on the number of vertices. To combine the representations
of subgraphs, we sometimes need them to have a 2-sided layout, and hence we frequently use Lemma~\ref{lem:2-sided} proved in Section~\ref{sec:cz}. Also, notice that for Lemma~\ref{lem:3-sided} the special edge \emph{must}
exist (this is needed in Case 1 to find private regions), while for
Lemma~\ref{lem:2-sided}, $F$ is allowed to be empty.

We again reduce the number of cases 
in the proof of Lemma~\ref{lem:3-sided} 
by using the reversal trick.  Define
$G\rev$ as in Section~\ref{sec:cz}.
Presume we have a 3-sided/reverse 3-sided representation of
$G\rev$.    
We can obtain a 3-sided/reverse 3-sided representation of $G$ by flipping the 
reverse 3-sided/3-sided representation of $G\rev$ diagonally (i.e., along the line
defined by $(x=y)$).  Again, this effectively switches corners
$A$ and $B$ (corner $C$ remains the same), and replaces special edge $(C,c_2)$ by $(C,b_{s-1})$
and vice versa.
If $G$ satisfies the chord condition
with respect to corners $(A,B,C)$, then $G\rev$ satisfies the chord condition with respect to corners $(B,A,C)$.
Hence for all the following cases, we may again (after
possibly applying the above flipping operation) make a restriction on
which edge the special edge is. Alternatively, we only need to give the 
construction
for the 3-sided, but not for the reverse 3-sided layout. 

So let $G$ and a special edge $e$ be given, and set $F=\{e\}$.
In the base case, $n=3$, so
$G$ is a triangle, and the three corners $A,B,C$ must be the three vertices 
of this triangle.  The desired \int{F} representations 
for all possible choices of $F$
are depicted in Figure~\ref{fig:base-case}.
The induction step for $n \geq 4$ uses the same case distinctions
as the proof of Lemma~\ref{lem:2-sided}; we describe
these cases in separate subsections.

%\input{sec3-combined-caseDeg2.tex}
%%%%%%%%%%%%%%%%%%%%%%%%%%%%%%%%%%%%%%%%%%%%%%%%%%%%%%%%%%%%%%%%%%%%%%%%%%%%%%%%%
%%%%%%%%%%%%%%%%%%%%%%%%%       Case of a chord           %%%%%%%%%%%%%%%%%%%%%% 2 sided
%%%%%%%%%%%%%%%%%%%%%%%%%%%%%%%%%%%%%%%%%%%%%%%%%%%%%%%%%%%%%%%%%%%%%%%%%%%%%%%%%

\subsection{$C$ has degree $2$} 
\label{case:c-degree-2-2-sided}
Since $G$ is a triangulated disk with $n \geq 4$, 
$(b_{s-1}, c_2)$ is an edge.   Define $G'$ as in Section~\ref{case:cz-c-degree-2-2-sided} to be $G-\{C\}$ and recall that $G'$ satisfies the chord condition for corners $A':=A,
B':=B$ and a suitable choice of $C'\in \{b_{s-1},c_2\}$
Thus, we can apply induction to $G'$.

% This was moved
%To create a 2-sided representation of $G$, we use a 2-sided
%\int{F'} representation $R'$ of~$G'$.
%We introduce a new vertical curve $\bb{C}$ placed between 
%$\bb{b_{s-1}}$ and $\bb{c_2}$ below $R'$. Add a bend at the upper
%end of $\bb{C}$ and extend it leftwards or rightwards.   If the %special
%edge $e$ exists, then extend $\bb{C}$ until it hits the curve of %the
%other endpoint of $e$; else extend it only far enough to allow for %the
%creation of the private region.

To create a 3-sided representation of $G$, we use a 3-sided
\int{F'} representation $R'$ of $G'$, where $F'=\{(b_{s-1},c_2)\}$.
Note that regardless of which vertex
is $C'$, we have $\bb{b_{s-1}}$ as bottommost curve on the left and
$\bb{c_2}$ as leftmost curve on the bottom.
Introduce a new horizontal segment representing ${C}$ which intersects $\bb{c_2}$ if $F = \{(C,c_2)\}$,
or a vertical segment which intersects $\bb{b_{s-1}}$ if $F = \{(C,b_{s-1})\}$.

After suitable lengthening, the curves intersect the 
bounding box in the required order.
One can find the chair-shaped
private region for the only new face $\{C,c_2,b_{s-1}\}$
as shown in 
Figure~\ref{fig:case2-simpler}.
% and Figure~\ref{fig:case2-3sided}. 
Observe that no bends were added to the curves of $R'$ and that $C$
has the required number of bends.% in both representations. 

\begin{figure}[ht]
\centering
\includegraphics[width=\textwidth, trim={0 4cm 0 0}, clip]{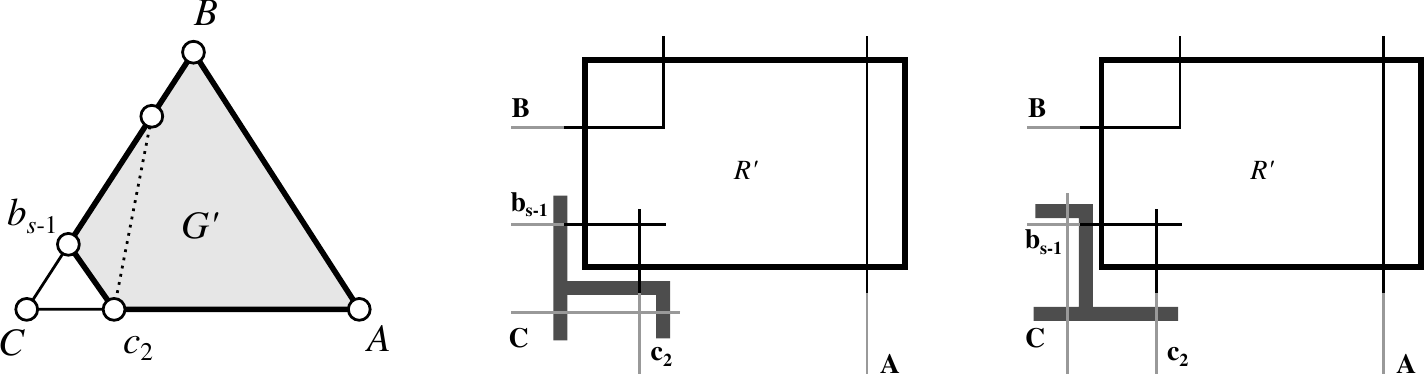}
\caption{Case 1: 3-sided representation if $C$ has degree $2$.
}  
\label{fig:case2-simpler}
\end{figure}

Since we have given the constructions for both possible special edges,
we can obtain the reverse 3-sided representation by diagonally
flipping a 3-sided representation of $G\rev$.

%\input{sec3-combined-caseChord.tex}

%%%%%%%%%%%%%%%%%%%%%%%%%%%%%%%%%%%%%%%%%%%%%%%%%%%%%%%%%%%%%%%%%%%%%%%%%%%%%%%%%
%%%%%%%%%%%%%%%%%%%%%%%%%       Case chord           %%%%%%%%%%%%%%%%%%%%%%%% 2 sided
%%%%%%%%%%%%%%%%%%%%%%%%%%%%%%%%%%%%%%%%%%%%%%%%%%%%%%%%%%%%%%%%%%%%%%%%%%%%%%%%%

\subsection{$G$ has a chord incident to $C$}  
\label{case:special}

Let $(C,a_i)$ be a chord that minimizes $i$ (i.e., is closest to $A$).
Define W-triangulations $G_1$ and $G_2$ with corners $(C,A,a_i)$ for $G_1$ and $(a_i,B,C)$ for $G_2$ as in Section~\ref{case:cz-special}, and recall that they satisfy the chord condition.
So, we can apply induction to both $G_1$ and $G_2$, obtain representations
$R_1$ and $R_2$ (with respect to the aforementioned corners) for them, and combine them suitably.  We will do so
for both possible choices of special edge, and hence need not give the
constructions for reverse 3-sided layout due to the reversal trick.

{\medskip\noindent{\bf Case 2(a): $F=\{(C,b_{s-1})\}$. }}
Using Lemma~\ref{lem:2-sided}, construct a 2-sided \int{(C,a_i)} representation $R_1$ of $G_1$ with respect to the aforementioned corners of $G_1$.
% by Lemma~\ref{lem:2-sided}. 
Inductively, construct a 3-sided \int{F} representation $R_2$
of $G_2$ with respect to the corners of $G_2$. Note that $\bb[R_2]{C}$ and $\bb[R_2]{a_i}$ are
on the bottom side of $R_2$ with $\bb[R_2]{C}$ to the left of~$\bb[R_2]{a_i}$.

First, rotate $R_1$ by 180{\degree}. We can now merge
$R_1$ and $R_2$ as described in Section~\ref{case:cz-c-degree-2-2-sided}
since all relevant curves end vertically in $R_1$ and $R_2$. 
%The curves can be unified without adding bends by adding vertical segments. 
The curves of outer-face vertices of $G$ then cross (after suitable lengthening) the bounding box in the required order. See also Figure~\ref{fig:case-1-3-sided}.

\begin{figure}[ht]
    \centering
    \includegraphics[width=\textwidth]{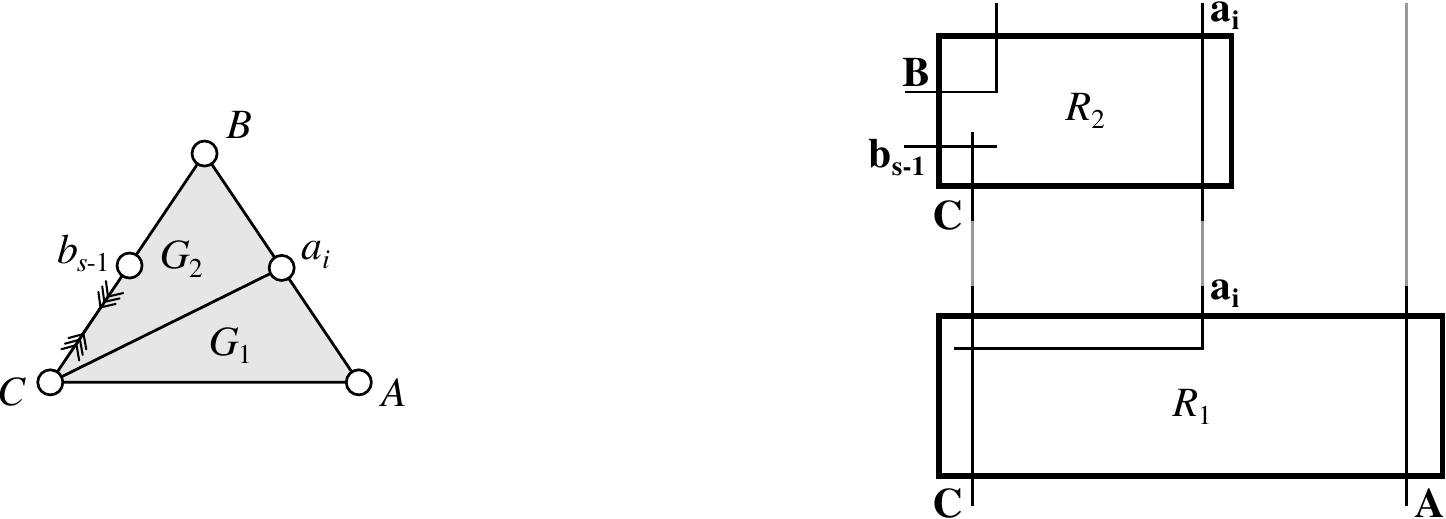}
    \caption{Case~2(a): Constructing a 3-sided \int{(C,b_{s-1})} 
representation when $C$ is incident to a chord.}
    \label{fig:case-1-3-sided}
\end{figure}

%Every interior face $f$ of $G$ is contained in $G_1$ or $G_2$ and hence has a private region in 
%$R_1$ or $R_2$. As our construction does not make
%any changes inside the bounding boxes of $R_1$ and $R_2$, the private region of $f$ 
%is contained in $R$ as well. 

\medskip{\noindent{\bf {Case 2(b): $F = \{(C,c_2)\}$. }}}
For the 3-sided construction, it does not seem possible to merge suitable
representations of $G_1$ and $G_2$ directly, since the geometric restrictions
imposed onto curves $\bb{A}, \bb{B}, \bb{C}, \bb{c_2}$ and $\bb{a_i}$ by the
3-sided layout cannot be satisfied using 3-sided and 2-sided representations
of $G_1$ and $G_2$.  We hence use an entirely different approach that splits
the graph further; it resembles
Case~1 in \cite[Proof of Lemma~2]{cit:ham-cycle}. 
Let $G_Q=G_1-C$, and observe that it is bounded by $P_{c_2A}$, $P_{A,a_i}$,
and the path formed by the neighbours $c_2 = u_1, u_2, \ldots, u_q = a_i$ of $C$ in $G_1$ in CCW order. 
We must have $q\geq 2$, but possibly $G_1$ is a triangle $\{C,A,a_i\}$ and $G_Q$ then
degenerates into an edge.  If $G_Q$ contains at least three vertices, then
$u_2,\dots,u_{q-1}$ are interior since chord $(C,a_i)$ was chosen closest to $A$,
and so $G_Q$ is a W-triangulation.

We divide the proof into two subcases,
depending on whether $A\neq c_2$ or $A=c_2$. See also Figures~\ref{fig:3sided-case1ba}
and~\ref{fig:3sided-case1bb}.

\medskip{\noindent{\bf {Case 2(b)1: $A \neq c_2$. }}}
Select the corners of $G_Q$ as $(A_Q := c_2, B_Q := A, C_Q := a_i = u_q)$, and observe
that it satisfies the chord condition since the three corners are distinct and
the three outer-face paths are
sub-paths of $P_{CA}$ and $P_{AB}$ or in the neighbourhood of $C$, respectively.
Apply Lemma~\ref{lem:2-sided} to construct a 2-sided \int{(u_q,u_{q-1})} representation
$R_Q$ of $G_Q$ with respect to the corners of $G_Q$.
Inductively, construct a 3-sided \int{(C,a_i)} representation $R_2$
of $G_2$ with respect to the corners of $G_2$. 

To combine $R_Q$ with $R_2$,
rotate $R_Q$ by 180{\degree}. 
Appropriately stretch 
$R_Q$ and translate it so that it is below $R_2$ with $\bb[R_Q]{a_i}$ and $\bb[R_2]{a_i}$ in the same
column, and so that the vertical segment of each of the curves $\bb{u_{q-1}}, \ldots, \bb{u_{1}} = \bb{c_2}$ is to the left of the
bounding box of $R_2$. Then $\bb[R_Q]{a_i}$ and $\bb[R_2]{a_i}$
can be unified without adding bends by adding a vertical segment.
Curves $\bb{u_{q-1}}, \ldots, \bb{u_{1}} = \bb{c_2}$ in the rotated $R_Q$ can be
appropriately stretched upwards, intersected by $\bb[R_2]{C}$ after stretching it leftwards, 
and then top-tangled leftwards.  
All the curves of outer-face vertices of $G$ then cross 
(after suitable lengthening) a bounding box in the required order. 

All faces in $G$ that are not interior to $G_Q$ or $G_2$ are bounded by $(C, u_k, u_{k+1})$, $1 \leq k < q$.
The chair-shaped private regions for such faces can be found as shown in Figure~\ref{fig:3sided-case1ba}.

\medskip{\noindent{\bf {Case 2(b)2: $A = c_2$. }}}
In this case the previous construction cannot be applied since the
corners for $G_Q$ would not be distinct.  We give an entirely 
different construction.

If $G_Q$ has at least 3 vertices, then $q\geq 3$ since
otherwise by $A=c_2=u_1$ edge $(A,u_q)$ would be a chord on $P_{AB}$.  Choose as corners
for $G_Q$ the vertices $A_Q:=A, B_Q:=a_i=u_q$ and $C_Q:=u_{q-1}$ and
observe that the chord condition holds since all three paths on the
outer-face belong to $P_{AB}$ or are in the neighbourhood of $C$.
By Lemma~\ref{lem:2-sided}, $G_Q$ has a 2-sided \int{(u_q,u_{q-1})} representation
$R_Q$ with the respective corners and private 
region for every interior face of $G_Q$.
If $G_Q$ has at most 2 vertices, then $G_Q$ consists of edge $(A,a_2)$ only,
and we use as representation $R_2$ two parallel vertical segments
$\bb{a_2}$ and $\bb{A}$.

We combine $R_Q$ with a representation $R_1$ of $G_1$ that is {\em different}
from the one used in the previous cases; in particular we rotate corners.
Construct a reverse 3-sided
layout $R_2$ of $G_2$ with respect to corners $C_2:=a_i$, $A_2:=B$ and $B_2:=C$.   Rotate $R_2$ by 180\degree, and translate it
so that it is situated below $R_Q$ with $\bb[R_Q]{a_i}$ and $\bb[R_2]{a_i}$ 
in the same column. Then, extend $\bb[R_2]{C}$ until it crosses
$\bb[R_Q]{u_{q-1}},\dots, \bb[R_Q]{u_{1}}$ (after suitable lengthening),
and then bottom-tangle 
$\bb[R_Q]{u_{q-1}},\dots, \bb[R_Q]{u_{1}}$ rightwards.  This creates
intersections for all edges in path $u_q,u_{q-1},\dots,u_1$, except
for $(u_q,u_{q-1})$, which is either on the outer-face (if $q=2$) or had an intersection in $R_Q$.
One easily verifies that the result is a 3-sided layout, and private
regions can be found for the new interior faces as shown in Figure~\ref{fig:3sided-case1bb}.

\begin{figure}[ht]
\centering
\includegraphics[width=.8\textwidth]{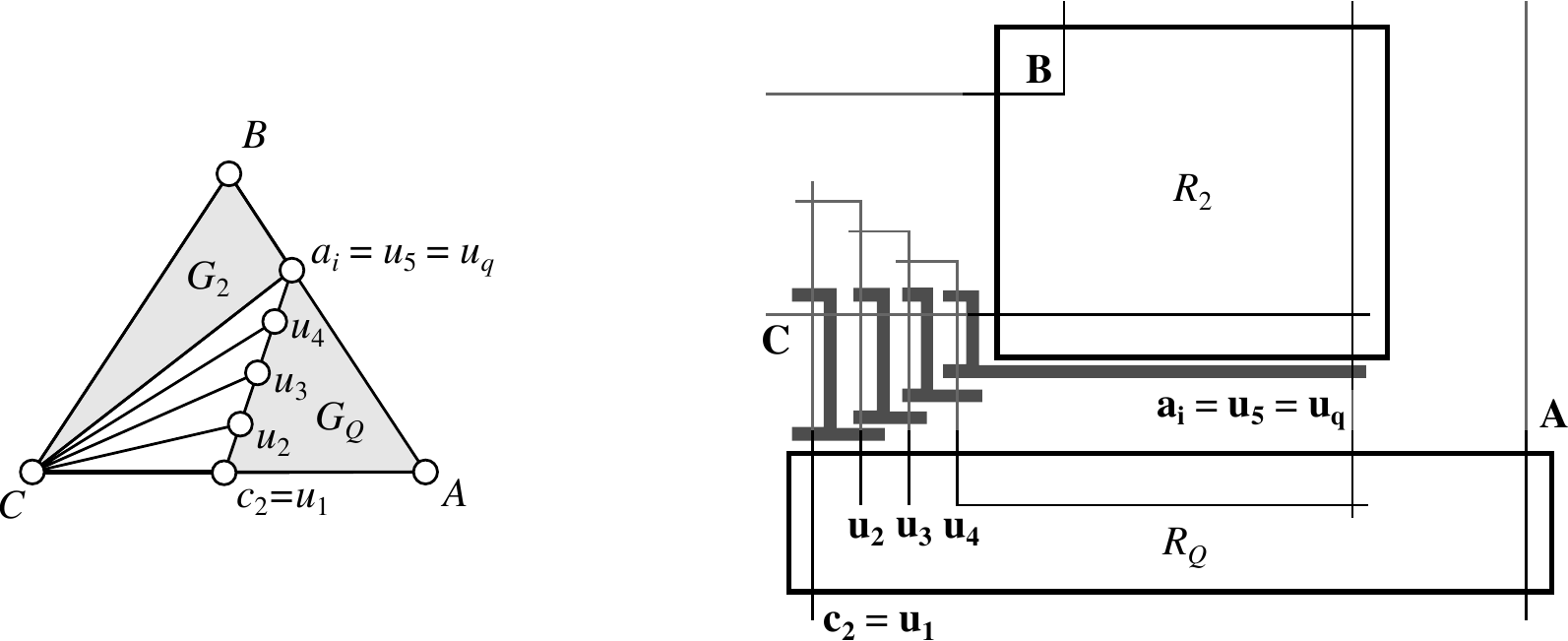}
\caption{Case 2(b)1: $C$ is incident to a~chord, $F = \{(C,c_2)\}$, and $c_2 \neq A$.}
\label{fig:3sided-case1ba}
\end{figure}

%\TBcomment{Fig.~\ref{fig:3sided-case1ba}: mark the special edge}

\begin{figure}[ht]
\centering
\includegraphics[width=.9\textwidth]{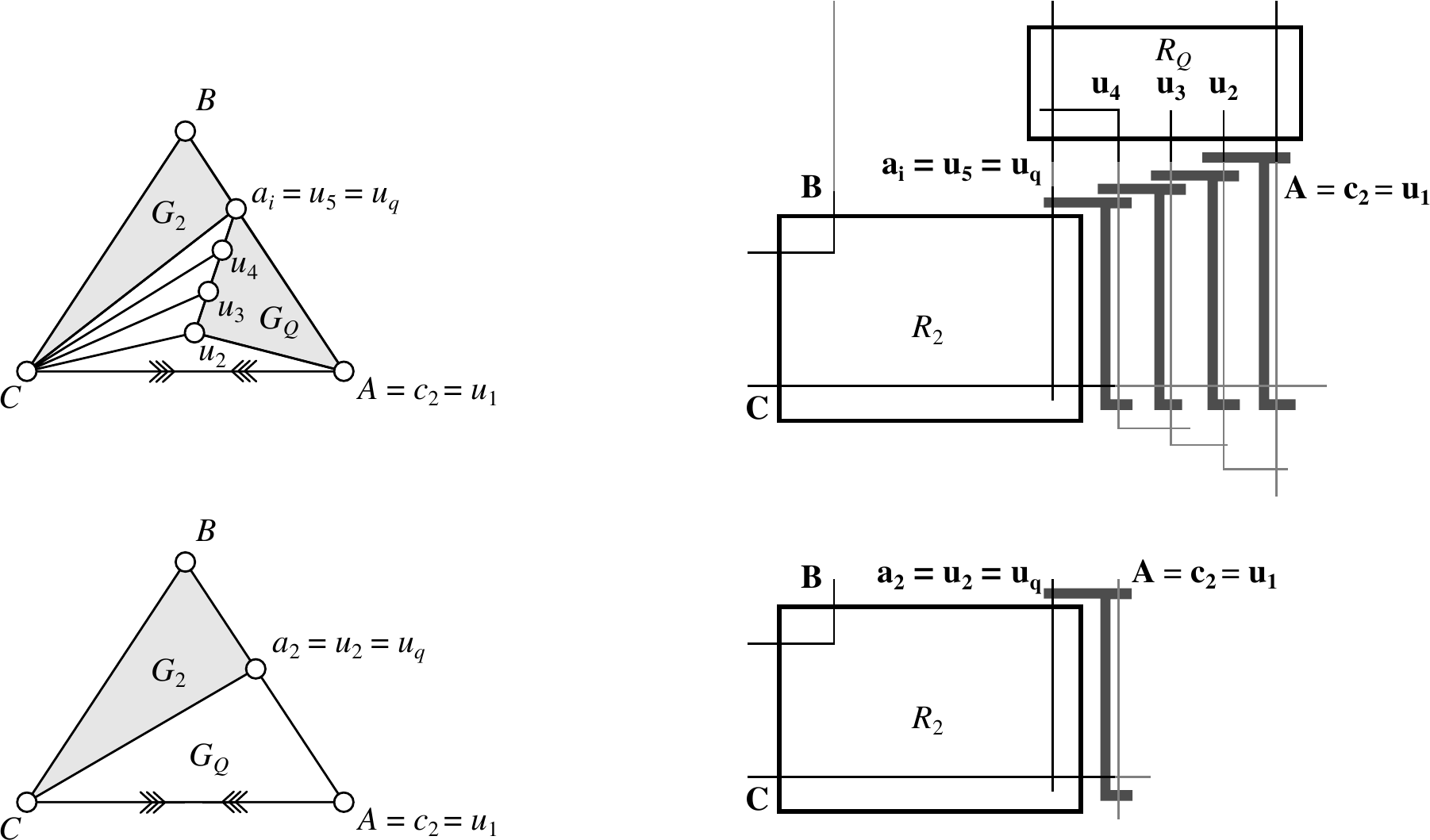}
\caption{Case 2(b)2: Construction when $C$ is incident to a chord, $c_2 = A$, $F = \{(C,c_2)\}$ and $(A,a_i,C)$ is not a face (top),
or when $(A,a_i,C)$ is a face (bottom).  }
\label{fig:3sided-case1bb}
\end{figure}

%\input{sec3-combined-case3.tex}
%%%%%%%%%%%%%%%%%%%%%%%%%%%%%%%%%%%%%%%%%%%%%%%%%%%%%%%%%%%%%%%%%%%%%%%%%%%%%%%%%
%%%%%%%%%%%%%%%%%%%%%%%%%       Case 3           %%%%%%%%%%%%%%%%%%%%%% 2 sided
%%%%%%%%%%%%%%%%%%%%%%%%%%%%%%%%%%%%%%%%%%%%%%%%%%%%%%%%%%%%%%%%%%%%%%%%%%%%%%%%%

\subsection{$G$ has no chords incident to $C$ and ${\mathit deg}(C) \geq 3$}
\label{case:chordless}

We will give explicit constructions for 3-sided
and reverse 3-sided layout, and may hence (after applying the reversal trick)
assume that the special edge is $(C,c_2)$.

As in Section~\ref{case:cz-chordless}, let
$u_1,\dots,u_q$  be the neighbours of $C$ and let $j$ be minimal such
that $u_j$ has another neighbour on $P_{AC}$.
We again distinguish two sub-cases.

\medskip{\noindent{\bf {Case 3(a): $j\neq 1$. }}}
As in Section~\ref{case:cz-chordless}, define $t_1,\dots,t_x$, 
$G_R$, $G_B$, $G_Q$, $G_L$ and $G_0$.  See also Figure~\ref{fig:case3-splitting}.
Recall that $G_R$ satisfies all conditions with respect to corners $A_R:=A$, $B_R:=B$ and $C_R\in \{u_1,u_2\}$.
Apply induction on $G_R$ and obtain an
\int{(u_1, u_2)} representation $R_R$ with respect to the corners of $G_R$.  We use as layout for $R_R$ the type 
that we want for $G$, i.e., use a 3-sided/reverse 3-sided layout if we want $G$ to 
have a 3-sided/reverse 3-sided representation.

\iffalse
Next, consider $G_0$, and again distinguish two cases:

\begin{enumerate}[(1)]
	\item $j=q-1$, and hence $t_1=u_q=c_2$ and $G_0$ consists of only $c_2$.
		In this case, the representation of $R_0$ consists of a single vertical line segment $\bb{c_2}$.
	\item $j<q-1$, so $G_0$ contains at least three vertices $u_{q-1},u_q$ and
		$t_1$.  Then $G_0$ is a W-triangulation which satisfies the chord condition
		with respect to corners $A_0 := c_2, B_0 := t_1$ and $C_0 := u_{j+1}$ (see Section~\ref{case:cz-chordless} for an argument)  In this case,
		construct a 2-sided \int{(u_{j+1},u_{j+2})} representation $R_0$ of $G_0$ with
		respect to these corners by Lemma~\ref{lem:2-sided}.
\end{enumerate}

Finally, we create a representation $R_Q$ of $G_Q=G_B-u_j$.   If $G_Q$ is a single vertex or
a single edge, then simply use vertical segments for the curves of its vertices.  Otherwise, obtain a $2$-sided \int{\emptyset} $1$-string $B_2$-VPG representation with respect to corners
$t_1$ and $t_x$ using Claim~\ref{claim}.
\fi
For $G_0$ and $G_Q$, we use exactly the same representations $R_0$ and $R_Q$
as in Section~\ref{case:cz-chordless}.

Combine now these three representations $R_R$, $R_Q$ and $R_0$ as described in Section~\ref{case:cz-chordless}, Case 3(a);
this can be done since the relevant curves $\bb{u_2^{R_R}},\dots,\bb{u_t^{R_R}}$
all end vertically in $R_R$.  
See also Figure~\ref{fig:case-3}.
The only change occurs at curve $\bb{C}$; in
Section~\ref{case:cz-chordless} this received a bend and a downward segment,
but here we omit this bend and segment and let $\bb{C}$ end horizontally
as desired.
\iffalse
. Introduce a new horizontal segment $\bb{C}$ and place it so that it
intersects curves $\bb{u_q},\dots,\bb{u_{j+2}},\allowbreak\bb{u_2}, \dots, \bb{u_j},\bb{u_{j+1}}$ (after lengthening them, 
if needed).  
If $j<q-1$ then top-tangle $\bb{u_q},\dots,\bb{u_{j+2}}$ leftwards. 
Bottom-tangle $\bb{u_2},\dots,\bb{u_j}$ rightwards.  
The construction hence creates intersections for all edges in the path $u_1,\dots,u_q$,
except for $(u_{j+2},u_{j+1})$ (which was represented in $R_0$) and
$(u_2,u_1)$ (which was represented in $R_R$).  

Bend and stretch $\bb[R_R]{u_j}$ 
rightwards so that it crosses the curves of all its neighbours in $G_0\cup G_Q$.
Finally, consider the path between the neighbours of $u_j$ CCW from $u_{j+1}$ to $t_x$. 
Create intersections for any edge on this path that is interior in $G$
by top-tangling their curves rightwards.
\fi

One easily verifies that the curves intersect the bounding boxes as desired.
The constructed representations contain private regions for all interior faces of $G_R$,
$G_Q$ and $G_0$ by induction. The remaining faces are of the form $(C,u_i,u_{i+1}), 1 \leq i < q$,
and $(u_j, w_k, w_{k+1})$ where $w_{k}$ and $w_{k+1}$ are two consecutive neighbours of $u_j$
on the outer-face of $G_0$ or $G_Q$.
Private regions for those faces are shown in 
Figure~\ref{fig:case-3}.

\begin{figure}[ht]
\centering
\includegraphics[width=.99\textwidth, trim={0 13.8cm 0 0}, clip]{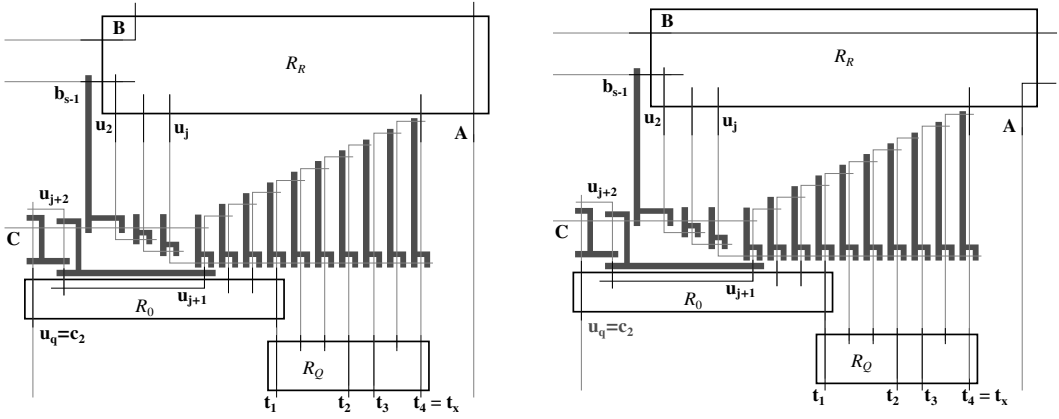}
\caption{Case 3(a): 3-sided representation when $deg(C) \geq 3$, there is
no chord incident to $C$, $F = \{(C,c_2)\}$, and $j > 1$.
%there is no chord from $b_{s-1}$ to $P_{CA}$. 
The construction matches the graph depicted in Figure~\ref{fig:case3-splitting} left.
}
\label{fig:case-3}
\end{figure}

%\begin{figure}[ht]
%\centering
%\includegraphics[width=.8\textwidth]{figures/case33sided-b-4-new}
%\caption{Case 3: Combining subgraphs when $deg(C) \geq 3$, there %is
%no chord incident with $C$, and $F \subseteq \{ (C,c_2)\}$.  
%(Top left) 3-sided and (top right) reverse 3-sided construction.  %(Bottom) 2-sided 
%construction for the case $F=\{(C,c_2)\}$ and $F=\emptyset$.
%The construction matches the graph depicted in %Figure~\ref{fig:case3-splitting} left.
%}
%\label{fig:case-3}
%\end{figure}

\medskip{\noindent{\bf {Case 3(b): $j=1$, i.e., there exists a chord $(b_{s-1},c_i)$. }}}
In this case we cannot use the above construction directly since we 
need to bend $\bb{u_j}=\bb{u_1}=\bb{b_{s-1}}$ horizontally rightwards to create 
intersections, but then it no longer extends vertically downwards as required
for $\bb{b_{s-1}}$.   
The simple construction described in Section~\ref{case:cz-chordless}, Case 3(b) does not apply either. However, if we use a
different vertex as $u_j$ (and argue carefully that the chord condition holds), then
the same construction works.

%\TBcomment{Fig.~\ref{fig:case3-splitting-j1}: I've copied the previous
%picture here, but I wanted slight changes: add chord from $u_1$ to $P_{CA}$,
%and rename $j$ into $j'$.  Also, mark special edge. }
\begin{figure}[ht]
\centering
\includegraphics[width=.49\textwidth]{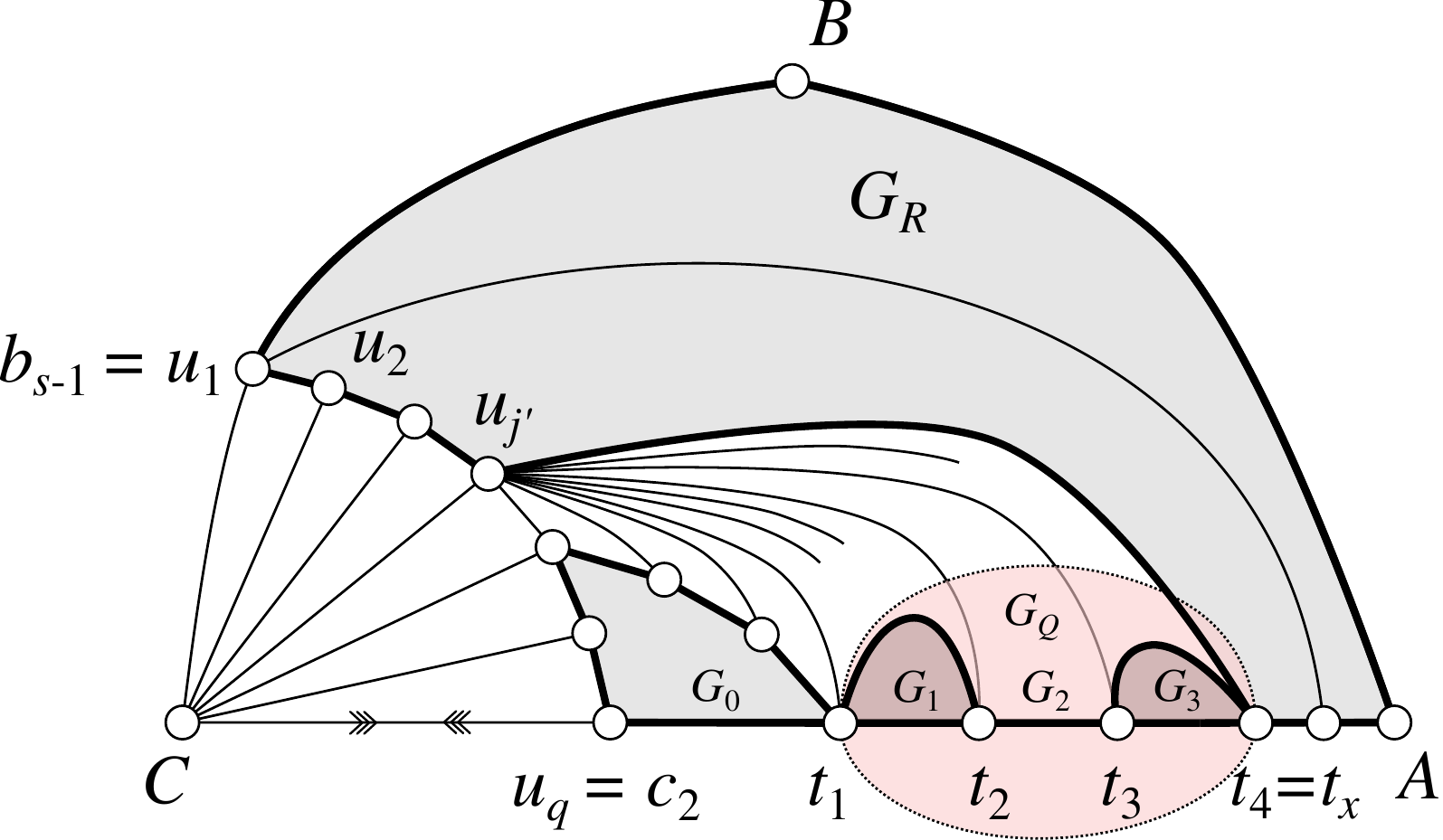}
\caption{Case 3(b): Splitting the graph when $deg(C) \geq 3$, no chord is 
incident to $C$, and $j=1$.
} 
\label{fig:case3-splitting-j1}
\end{figure}

Recall that $u_1, \ldots, u_q$ are the
neighbours of corner $C$ in CW order starting with $b_{s-1}$ and ending with $c_2$. 
We know that $q \geq 3$ and $u_2,\dots,u_{q-1}$ are not on the outer-face. 
Now define $j'$ as follows: Let $u_{j'}$, ${j'} > 1$ be a neighbour
of $C$ that has at least one neighbour on $P_{CA}$ other than $C$, and choose $u_{j'}$
so that ${j'}$ is minimal while satisfying ${j'}>1$.  
Such a $j'$ exists since $u_{q-1}$ 
has another neighbour on $P_{CA}$, and by $q\geq 3$ we have $q-1>1$. 
Now, separate $G$ as in the previous case, except use
$j'$ in place of $j$.  Thus, define $t_1,\dots,t_x$ to be the neighbours of $u_{j'}$ on $P_{c_2A}$, in order,
and separate $G$ into three graphs as follows:

\begin{itemize}
\item The \emph{right} graph $G_R$ 
is bounded by $(A, \stackrel{P_{AB}}{\ldots}, B, \stackrel{P_{Bu_1}}{\ldots}, u_1,u_2,\dots,u_{j'}, t_x, \stackrel{P_{t_xA}}{\ldots},A)$.
\item Let $G_B$ be the graph bounded by $(u_{j'}, t_1, \stackrel{P_{t_1t_x}}{\ldots}, t_x,u_{j'})$. 
	Define $G_Q:= G_B-u_{j'}$.
\item Let $G_L$ be the graph bounded by $(C, \stackrel{P_{Ct_1}}{\ldots},t_1,u_{j'},C)$.
	Define $G_0:=G_L-\{u_{j'},C\}$.
\end{itemize}

Observe that the boundaries of all the graphs are simple cycles, and thus they are
W-triangulations. Select $(A_R := A, B_R := B, C_R := u_2)$ to be the corners of $G_R$
and argue the chord condition as follows:

\begin{itemize}
    \item $G_R$ does not have any chords on $P_{C_RA_R}$ as such chords
   would either contradict the minimality of $j'$, or violate the chord condition
   in $G$.
    \item $G_R$ does not have any chords on $P_{A_RB_R} = P_{AB}$.
    \item $G_R$ does not have any chords on $P_{Bb_{s-1}}$ as it is a sub-path
    of $P_{BC}$ and they would violate the chord condition in $G$. It also does not
    have any chords in the form $(C_R = u_2, b_\ell), 1 \leq \ell < s-1$ as they would 
    have to intersect the chord $(b_{s-1},c_i)$, violating the planarity of $G$. Hence,
    $G_R$ does not have any chords on $P_{C_RA_R}$. 
	\item
	Notice in particular that the chord $(u_1,c_i)$ of $G_R$ is {\em not} a violation of
	the chord condition since we chose $u_2$ as a corner.
\end{itemize}

Hence, we can obtain a representation $R_R$ of $G_R$ with 3-sided or reverse 3-sided layout
and special edge $(u_1 = b_{s-1}, u_2)$. 
For  graphs $G_Q$ and $G_0$ the corners are chosen, 
the chord condition is verified, and the representations are obtained exactly as in Case 3(a).
Since the special edge of $G_R$ is $(u_1,u_2)$ as before, curves
$\bb{u_1}$ and $\bb{u_2}$ are situated precisely as in Case 3(a),
and we merge representations and find private regions as before.

\bigskip
This ends the description of the construction in all cases, and
hence proves Lemma~\ref{lem:3-sided}.

\section{From 4-Connected Triangulations to All Planar Graphs}

In this section, we prove Theorem~\ref{thm:main-claim}.  Observe that
Lemma~\ref{lem:3-sided} essentially proves it for 4-connected triangulations.
As in \cite{cit:chalopin-string} we extend it to all triangulations by induction
on the number
of separating triangles.

\begin{figure}[ht]
\centering
\includegraphics[width=.20\textwidth]{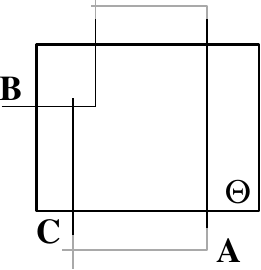}
\caption{Completing a 3-sided \int{(B,C)} representation 
by adding intersections for $(A,B)$ and $(A,C)$.}
\label{fig:completion}
\end{figure}

\begin{theorem}
\label{thm:triangulations}
Let $G$ be a triangulation with outer-face $(A,B,C)$. $G$ has a $1$-string $B_2$-VPG representation
with a chair-shaped private region for every interior face $f$ of $G$.
\end{theorem}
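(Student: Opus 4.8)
\textbf{Proof plan for Theorem~\ref{thm:triangulations}.}

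The plan is to proceed by induction on the number of separating triangles of $G$, exactly as indicated in the paragraph preceding the statement. In the base case $G$ has no separating triangle, so $G$ is a $W$-triangulation, and since the outer-face is the triangle $(A,B,C)$ we may take these as the three corners. The chord condition with respect to $(A,B,C)$ holds vacuously here because each path $P_{AB},P_{BC},P_{CA}$ consists of a single edge, so no interior edge can have both ends on one of them. Hence Lemma~\ref{lem:3-sided} applies and yields an \int{e} representation with (say) $3$-sided layout and a chair-shaped private region for every interior face; the special edge $e\in\{(C,b_{s-1}),(C,c_2)\}$ is just one of the two outer edges at $C$, and since here $\{b_{s-1},c_2\}=\{A,B\}$ this edge is an outer-face edge of $G$, not an interior one. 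The only missing intersections relative to a full $1$-string $B_2$-VPG representation of $G$ are those for the two outer-face edges not in $\{e\}$, i.e.\ the edge at $A$ and one edge at $C$ (or, depending on $e$, the two edges $(A,B)$ and one of $(A,C),(B,C)$). These can be added by the tangling technique of Section~\ref{sec:tangling} near the corners of the bounding box, exactly as in the proof of Theorem~\ref{thm:cz} and illustrated in Figure~\ref{fig:completion}: tangle $\bb{A}$ with $\bb{B}$ and $\bb{A}$ with $\bb{C}$ (whichever two edges are still missing), which creates precisely one new intersection per missing edge, adds at most one bend to the affected curves, and does not disturb any private region since the tangling happens outside the bounding box $\Theta$ and the private regions all lie inside $\Theta$. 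One checks that $\bb{A},\bb{B},\bb{C}$ still have at most two bends after this (in a $3$-sided layout $\bb{A}$ and $\bb{C}$ start with no bend and $\bb{B}$ with one bend, so each gains at most one).

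For the induction step, suppose $G$ has at least one separating triangle; pick an \emph{innermost} one, say $T=(x,y,z)$, meaning that the closed region $\overline{R(T)}$ bounded by $T$ contains no other separating triangle. Let $G_{\mathrm{in}}$ be the subgraph of $G$ drawn inside and on $T$, and $G_{\mathrm{out}}$ the subgraph drawn outside and on $T$ (so $V(G_{\mathrm{in}})\cap V(G_{\mathrm{out}})=\{x,y,z\}$ and $E(G_{\mathrm{in}})\cap E(G_{\mathrm{out}})=\{xy,yz,zx\}$). Both are triangulations: $G_{\mathrm{out}}$ has outer-face $(A,B,C)$ and strictly fewer separating triangles than $G$ (it lost $T$ and possibly others that were ``across'' $T$, but gained none), and $G_{\mathrm{in}}$ has outer-face $(x,y,z)$ and, because $T$ was innermost, \emph{no} separating triangle, so $G_{\mathrm{in}}$ is a $W$-triangulation. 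Apply the induction hypothesis to $G_{\mathrm{out}}$ to get a representation $R_{\mathrm{out}}$ with a chair-shaped private region for every interior face of $G_{\mathrm{out}}$, in particular for the face bounded by $T$ (which is an interior face of $G_{\mathrm{out}}$ since $G$ has vertices inside $T$, so $T$ is not a face of $G$, but in $G_{\mathrm{out}}$ the region inside $T$ is empty and hence a face). Apply Lemma~\ref{lem:3-sided} to $G_{\mathrm{in}}$ with corners $(x,y,z)$ — the chord condition holds vacuously as before since each side of the outer triangle is a single edge — to get a representation $R_{\mathrm{in}}$, with whatever layout and special edge are convenient; it too has a private region for each of its interior faces.

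It remains to glue $R_{\mathrm{in}}$ into the private region $\Phi$ of the face $T$ in $R_{\mathrm{out}}$. This is the heart of the argument and the step I expect to be the main obstacle: one must scale, rotate/flip, and translate $R_{\mathrm{in}}$ so that it fits entirely inside $\Phi$, so that the three curves $\bb[R_{\mathrm{in}}]{x},\bb[R_{\mathrm{in}}]{y},\bb[R_{\mathrm{in}}]{z}$ exit $\Phi$ exactly along the three edges of $\Phi$ where, by definition of a private region, the curves $\bb[R_{\mathrm{out}}]{x},\bb[R_{\mathrm{out}}]{y},\bb[R_{\mathrm{out}}]{z}$ already cross into $\Phi$ — two segments of one of them and one segment of each of the other two — and so that the matching up is done by extending collinear segments without introducing extra bends. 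The chair-shape of the private region and the boundary behaviour guaranteed by the layout definitions (which curve is straight, which has a bend, and in which order curves meet the sides of the bounding box $\Theta$ of $R_{\mathrm{in}}$) are precisely engineered so that such a fit is possible: the vertex whose curve appears with two segments in $\Phi$ is matched to the corner $A$ of $G_{\mathrm{in}}$ or to a suitable corner, and the two ``one-segment'' vertices to the sides of $\Theta$; one verifies in each of the finitely many orientations that the join respects the cyclic/linear orders and keeps every curve with at most two bends. After the join, the intersections inside $\Phi$ are exactly the intersections of $R_{\mathrm{in}}$, which realize exactly the edges of $G_{\mathrm{in}}$ other than $xy,yz,zx$; the edges $xy,yz,zx$ are realized once each already in $R_{\mathrm{out}}$ outside $\Phi$ (and $R_{\mathrm{in}}$ must be placed so that $\bb[R_{\mathrm{in}}]{x},\bb[R_{\mathrm{in}}]{y},\bb[R_{\mathrm{in}}]{z}$ do not cross each other inside $\Phi$, which one can arrange, possibly by using for $T$ a ``special edge'' of $R_{\mathrm{in}}$ equal to one of $xy,yz,zx$ so that that intersection is simply omitted); all other edges of $G$ keep their unique intersection from $R_{\mathrm{out}}$. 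Finally, every interior face of $G$ is an interior face of $G_{\mathrm{in}}$ or of $G_{\mathrm{out}}$ (the face $T$ of $G_{\mathrm{out}}$ having been ``used up'' and no longer being a face of $G$), and its private region survives: private regions of $G_{\mathrm{out}}$-faces other than $T$ are untouched, private regions of $G_{\mathrm{in}}$-faces lie inside $\Phi\subseteq\Theta_{\mathrm{out}}$ and are untouched by the join. This establishes the inductive step and hence the theorem.
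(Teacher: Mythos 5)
Your induction skeleton (induct on the number of separating triangles; base case via Lemma~\ref{lem:3-sided} plus tangling for the two missing outer edges) matches the paper, and the base case is fine. The inductive step, however, uses a decomposition that does not work as stated. You split at an innermost separating triangle $T=(x,y,z)$ into $G_{\mathrm{in}}$ and $G_{\mathrm{out}}$ \emph{both containing $T$}, and then try to glue the two representations by unifying the three curves of $x,y,z$. Two concrete obstacles: (i) Lemma~\ref{lem:3-sided} only produces \int{e} representations with a \emph{nonempty} special edge $e$, and for $G_{\mathrm{in}}$ (outer-face $(x,y,z)$) that $e$ is one of $xy,yz,zx$; this edge is therefore crossed inside $R_{\mathrm{in}}$ \emph{and} already crossed in $R_{\mathrm{out}}$, violating the 1-string property. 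Your parenthetical that the special edge's intersection is ``simply omitted'' has the convention backwards --- the special edge is the one outer-face edge that \emph{is} realized. (ii) The bend budget breaks at the junction: in a 3-sided layout of $G_{\mathrm{in}}$ the corner-$B$ curve already has one bend and the corner-$A$ curve exits the bounding box at both ends, while the curves of $x,y,z$ in $R_{\mathrm{out}}$ may already carry two bends; joining two curve pieces end-to-end generically costs an extra bend, so ``one verifies in each of the finitely many orientations'' cannot be made to work --- there is no orientation in which a 2-bend curve of $R_{\mathrm{out}}$ absorbs a 1-bend curve of $R_{\mathrm{in}}$ within the 2-bend limit.

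The paper avoids both problems by taking $G_2$ to be the subgraph induced by the vertices \emph{strictly inside} the (inclusion-wise minimal) separating triangle $\Delta=(a,b,c)$, so no curve is shared between $R_1$ and $R_2$ and the edges $ab,bc,ca$ live only in $G_1$. The price is that the edges between $\Delta$ and $G_2$, and the outer-face edges of $G_2$, must be created afresh; this is where the structural lemma quoted from Chalopin et al.\ is essential (each of $P_{AB},P_{BC},P_{CA}$ of $G_2$ is adjacent to a single specific vertex of $\Delta$, and $G_2$ satisfies the chord condition), since it lets one stretch each boundary path of $R_2$ in one direction to hit the right triangle curve (no new bends) and then tangle along each path (one new bend per outer-face curve, which is affordable). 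Your write-up is missing this adjacency lemma entirely, as well as the single-vertex case for the interior and the construction of private regions for the new faces between $\Delta$ and the boundary of $G_2$ --- faces which do arise in the paper's decomposition and must be re-equipped with chair-shaped regions for the induction to go through.
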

\begin{proof}
Our approach is exactly the same as in \cite{cit:chalopin-string}, except that
we must be careful not to add too many bends when merging subgraphs at separating triangles,
and hence must use 3-sided layouts.  Formally,
we proceed by induction on the number of separating triangles. In the base case, $G$ has
no separating triangle, i.e., it is 4-connected. As the outer-face is a triangle, $G$ clearly satisfies the chord 
condition. Thus, by Lemma~\ref{lem:3-sided}, it has a 3-sided \int{(B,C)} representation $R$
with private region for every face.  $R$ has an intersection for 
every edge except for $(A,B)$ and $(A,C)$. 
These intersections can be created by tangling $\bb{B},\bb{A}$ and
$\bb{C},\bb{A}$ suitably (see Figure~\ref{fig:completion}).
Recall that $\bb{A}$ initially did not have any bends, so it has 2 bends in the
constructed representation of $G$. The existence of private regions is guaranteed
by Lemma~\ref{lem:3-sided}.

Now assume for induction that $G$ has $k + 1$
separating triangles. Let $\Delta = (a,b,c)$ be an inclusion-wise minimal separating triangle of $G$.
%By Lemma~\ref{lem:separating-cycle}, 
It was shown in~\cite{cit:chalopin-string} that the subgraph $G_2$ induced by the vertices inside $\Delta$ 
is either an isolated vertex, or a W-triangulation with corners $(A,B,C)$ such that
the vertices on $P_{AB}$ are adjacent to $b$, the vertices on $P_{BC}$ are adjacent to $c$, and 
the vertices on $P_{CA}$ are adjacent to $a$. Furthermore, $G_2$ satisfies the chord
condition. Also, graph $G_1 = G - G_2$ is a W-triangulation that satisfies the chord
condition and has $k$ separating triangles. By induction, $G_1$ has a representation $R_1$ (with respect to the corners of $G_1$)
with a chair-shaped private region for every interior face $f$. Let $\Phi$ be the private region for face $\Delta$. 
Permute $a,b,c$, if needed, so that the naming corresponds to the one needed for the private region and, in particular, the vertical segment of $\bb{c}$ intersects the private region of $\Delta$ as depicted in Figure~\ref{fig:separating-triangle}.

{\medskip{\noindent{\bf {Case 1: ${G_2}$ is a single vertex ${v}$. }}}}
 Represent $v$ by inserting into $\Phi$ an orthogonal curve $\bb{v}$ with 2 bends 
that intersects $\bb{a},\bb{b}$ and $\bb{c}$.
The construction, together with private regions for the newly created
faces $(a,b,v)$, $(a,c,v)$ and $(b,c,v)$,
is shown in Figure~\ref{fig:separating-triangle}.

\medskip{\noindent{\bf {Case 2: ${G_2}$ is a W-triangulation. }}}
Recall that $G_2$ satisfies the chord condition with respect to corners $(A,B,C)$.
Apply Lemma~\ref{lem:3-sided} to construct a 3-sided \int{(C,b_{s-1})} representation $R_2$ of $G_2$ with respect to the corners of $G_2$. 
Let us assume that (after possible rotation) $\Phi$ has the orientation shown
in Figure~\ref{fig:separating-triangle} (right); if it had the symmetric orientation
then we would do a similar construction using a reverse 3-sided representation of $G_2$.
Place $R_2$ inside $\Phi$ as shown in Figure~\ref{fig:separating-triangle} (right).
Stretch the curves representing vertices on $P_{CA}$, $P_{AB}$ and $P_{Bb_{s-1}}$
downwards, upwards and leftwards respectively so that they intersect $\bb{a}, \bb{b}$ and $\bb{c}$.
Top-tangle leftwards the curves $\bb{A} = \bb{a_1}, \bb{a_{2}}, \ldots, \bb{a_r} = \bb{B}$.
Left-tangle downwards the curves $\bb{B} = \bb{b_1}, \bb{b_2}, \ldots, \bb{b_{s-1}}$
and bend and stretch $\bb{C}$ downwards so that it intersects $\bb{a}$.
Bottom-tangle leftwards the curves $\bb{C} = \bb{c_1}, \ldots, \bb{c_t} = \bb{A}$.
It is easy to verify that the construction creates intersections for all the edges between vertices
of $\Delta$ and the outer-face of $G_2$. The tangling operation then creates intersections for all the 
outer-face edges of $G_2$ except edge $(C,b_{s-1})$, which is already represented in $R_2$.

Every curve that receives a new bend represents a vertex on the outer-face of $G_2$, which
means that it initially had at most 1 bend. Curve $\bb{A}$ is the only curve that
receives 2 new bends, but this is allowed as $\bb{A}$ does not have any bends in $R_2$. 
Hence, the number of bends for every curve does not exceed 2. 

Private regions for faces formed by vertices $a,b,c$ and vertices
on the outer-face of $G_2$ can be found as shown in Figure~\ref{fig:separating-triangle} right.
\end{proof}

\begin{figure}[ht]
\centering
\includegraphics[width=.30\textwidth]{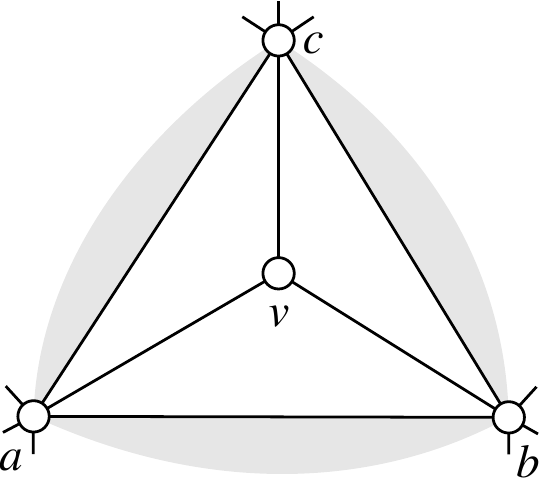}\hspace{1em}
\includegraphics[width=.45\textwidth]{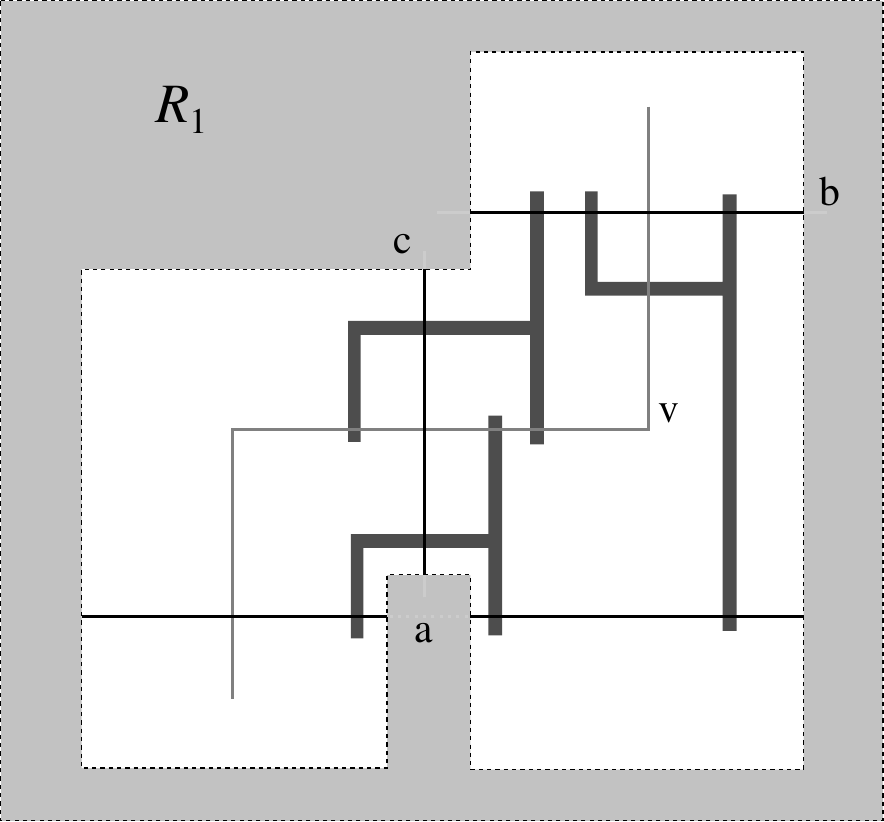}\\\medskip
\includegraphics[width=.30\textwidth]{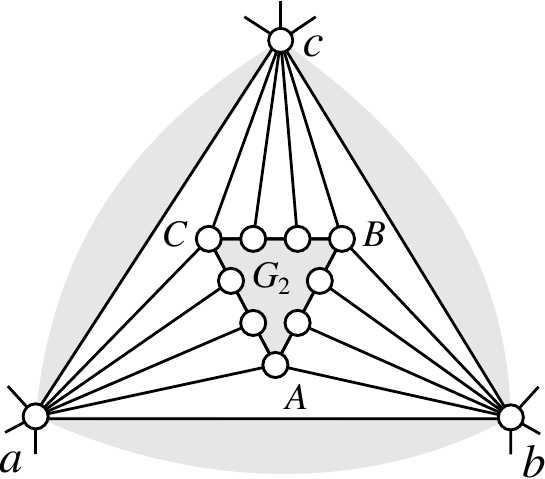}\hspace{1em}
\includegraphics[width=.45\textwidth]{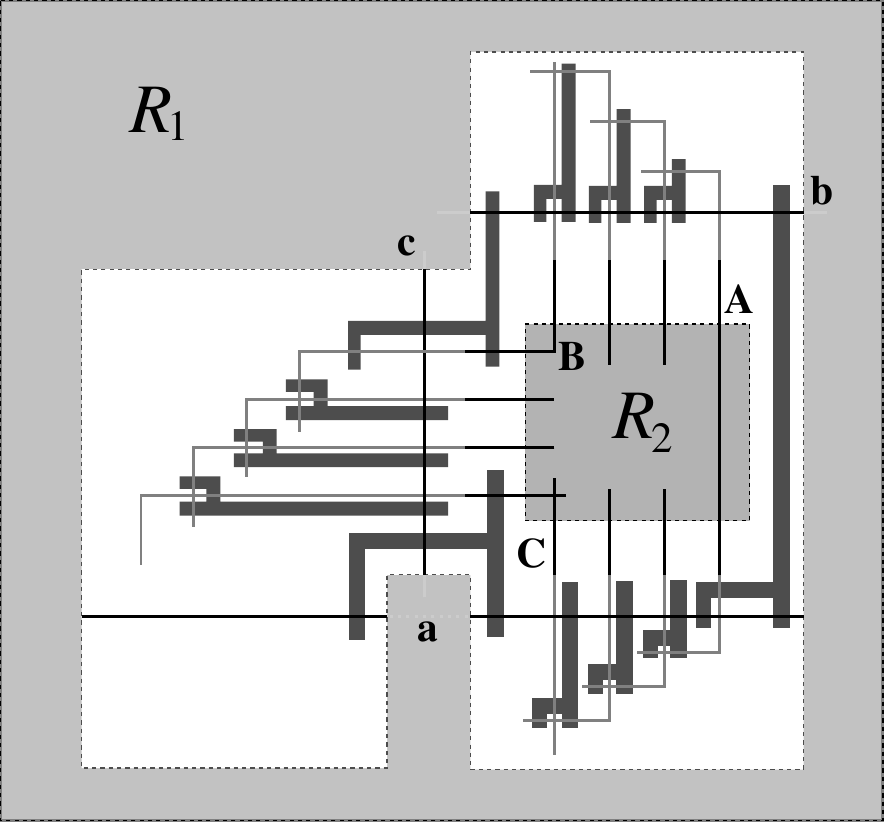}
\\\vspace{1em}
\caption{A separating triangle enclosing one vertex and the construction (top), and a
separating triangle enclosing a W-triangulation and the corresponding construction (bottom).}
\label{fig:separating-triangle}
\end{figure}

With Theorem~\ref{thm:triangulations} in hand, we can show our main result: every planar graph 
has a 1-string $B_2$-VPG representation.

\begin{proof}[Proof of Theorem~\ref{thm:main-claim}]
If $G$ is a planar triangulated graph, then the claim
holds by Theorem~\ref{thm:triangulations}.
To handle an arbitrary planar graph, repeatedly stellate the graph
(recall that this means inserting into each non-triangular face a
new vertex connected to all vertices of the face).
It is easily shown that one stellation makes the graph connected,
a second one makes it 2-connected, and a third one makes it
3-connected and triangulated.  Thus after 3 stellations we have 
a 3-connected triangulated graph $G'$ such that $G$ is an induced subgraph of $G'$. 
Apply Theorem~\ref{thm:triangulations} to construct a $1$-string $B_2$-VPG representation $R'$ of $G'$ (with the three outer-face vertices chosen as corners).
By removing curves representing vertices that are not in $G$, we obtain a $1$-string $B_2$-VPG representation of $G$.
\end{proof}

\section{Conclusions and Outlook}
\label{sec:conclusions}
\label{sec:outlook}

We showed that every planar graph has a 1-string $B_2$-VPG representation, i.e.,
a representation as an intersection graph of strings where strings cross at most once
and each string is orthogonal with at most two bends.    One advantage of this
is that the coordinates to describe such a representation are small,  since orthogonal
drawings can be deformed easily such that all bends are at integer coordinates.
Every vertex curve has at most two bends and hence at most 3 segments, so
the representation can be
made to have coordinates in an $O(n)\times O(n)$-grid with perimeter at most $3n$.
Note that none of the previous results provided an intuition of the required size of the grid.

Following the steps of our proof, it is not hard to see that our representation
can be found in linear time, since the only non-local operation is to test whether
a vertex has a neighbour on the outer-face. This can be tested by marking such
neighbours whenever they become part of the outer-face.  Since no vertex ever is
removed from the outer-face (updating the outer-face markers upon removing such vertices could increase the time complexity), this takes overall linear time.

The representation
constructed in this paper uses curves of 8 possible shapes for planar graphs.  
\iffalse
One
can in fact verify that the 2-sided layout (which only uses 2-sided layouts in its
recursions) uses only 4 possible shapes: C, Z and their horizontal mirror images.
Hence for triangulations without separating triangles (and, after stellating, 
all 4-connected planar graphs) 4 shapes suffice.
\fi
For 4-connected planar graphs, the shapes that have at most one vertical
segment suffice.
A natural question is if one can restrict the number of shapes
required to represent all planar graphs.

Bringing this effort further, is it possible to restrict the curves even more?  The existence of $1$-string $B_1$-VPG representations for planar graphs is open. Furthermore, Felsner et al.~\cite{cit:mfcs} asked the question whether every planar
graph is the intersection graph of only two shapes, namely $\{L,\Gamma\}$. As they point out,
a positive result would provide a different proof of Scheinerman's conjecture (see~\cite{cit:stretch} for details).
Somewhat inbetween: is every planar graph the intersection graph of
$xy$-monotone orthogonal curves, preferably in the 1-string model
and with few bends?

%%
%% Bibliography
%%
%% Either use bibtex (recommended), but commented out in this sample
%\bibliography{dummybib}
%% .. or use bibitems explicitely

\bibliography{socg}

\appendix

\section{Example}

Here we provide an example of constructing an \int{(18,16)} $1$-string $B_2$-VPG representation $R$ of the W-triangulation shown in Figure~\ref{fig:example-main}.   We use numbers and colors to distinguish vertices.  We use letters to 
indicate special vertices such as corners; note that the designation as such
a corner may change as the subgraph gets divided further.
The special edge is marked with hatches.

One can verify that the graph with the chosen corners (1,4,18) satisfies the chord condition. Vertex $C$ has degree 3, but it is not incident to a chord, so one applies the construction from Section~\ref{case:chordless}. Finding vertex $u_j$, we can see that $j > 1$, so Case 3(a) applies. Figure~\ref{fig:example-main} shows the graphs $G_R$, $G_Q$ and $G_0$, and how to construct $R$ from their representations $R_R$, $R_Q$ and $R_0$. 
%The labels show significant vertices (e.g. corners of $G_R$, $G_Q$ and $G_0$) for applying the induction. 

\begin{figure}[ht]
\includegraphics[width=\textwidth]{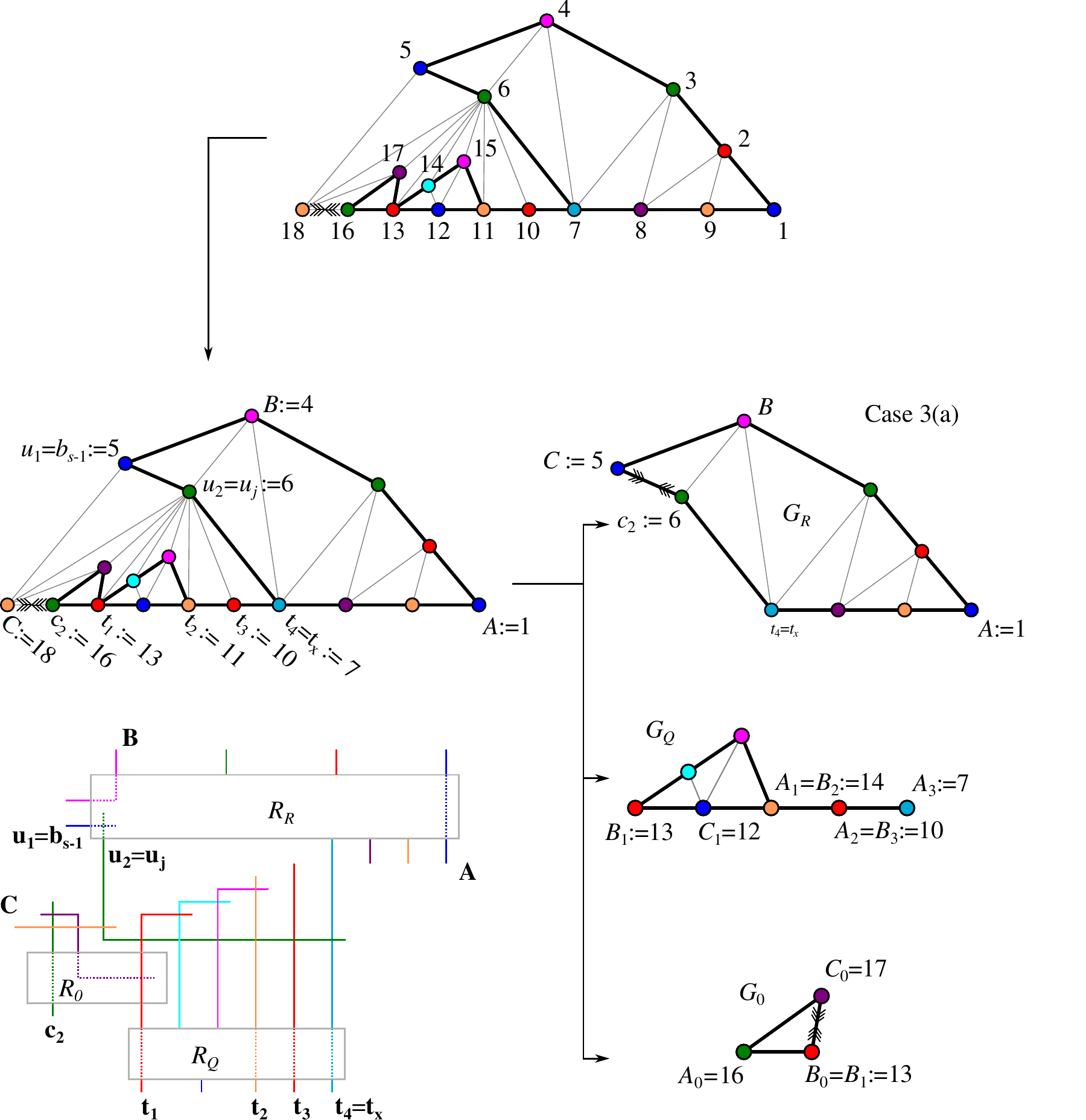}
\caption{Illustration of the example. The goal is to find an \int{(18,16)} $1$-string $B_2$-VPG representation of the W-triangulation shown on top, using corners (1,4,18).}
\label{fig:example-main}
\end{figure}

The construction of $R_Q$ is shown in Figure~\ref{fig:example-path}. The representation should have a $2$-sided layout and no special edge. Graph $G_Q$ decomposes into three subgraphs $G_1, G_2, G_3$. Their 2-sided representations are found separately and combined as described in the proof of Claim~\ref{claim}.

\begin{figure}[ht]
\includegraphics[width=\textwidth]{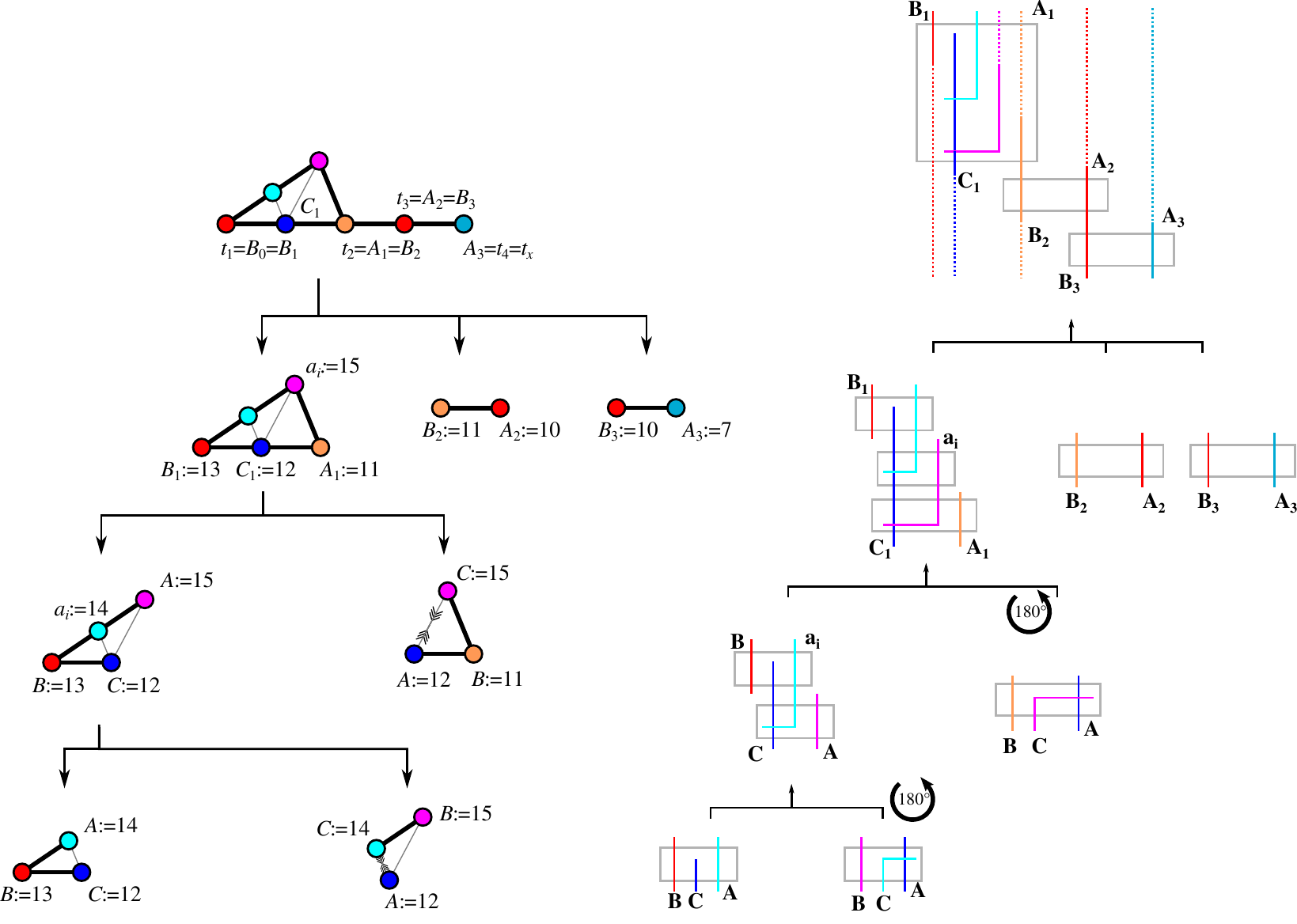}
\caption{Illustration of the example: Finding $R_Q$ (top right).}
\label{fig:example-path}
\end{figure}

The construction of $R_R$ is shown in Figure~\ref{fig:gr-decompose} (decomposition of $G_R$) and~\ref{fig:gr-compose} (combining the representations). Representation $R_R$ is supposed to be $3$-sided. We first apply Case 1 (Section~\ref{case:c-degree-2-2-sided}) twice, since corner $C$ has degree 2. 
Then corner $C$ becomes incident to a chord, so we are in
Case 2, and use sub-case Case 2(a) (Section~\ref{case:special})
since the special edge is $(C, b_{s-1}=B)$.
This case calls for a 3-sided representation of a $G_2$ (which is a triangle in this case, so the base case applies).  It also calls for a $2$-sided representation of $G_1$ with special edge $(C,A=c_2)$. 
This is Case~2 (Section~\ref{case:cz-special}) and we need to apply the reversal trick---we flip the graph and relabel the corners. After obtaining the representation, it must be flipped horizontally in order to undo the reversal. The construction decomposes the graph further, using Case~2 repeatedly, which breaks the graphs into elementary triangles. Their 2-sided representations are obtained using the base case and composed as stipulated by the construction.

\begin{figure}[ht]
\includegraphics[width=\textwidth]{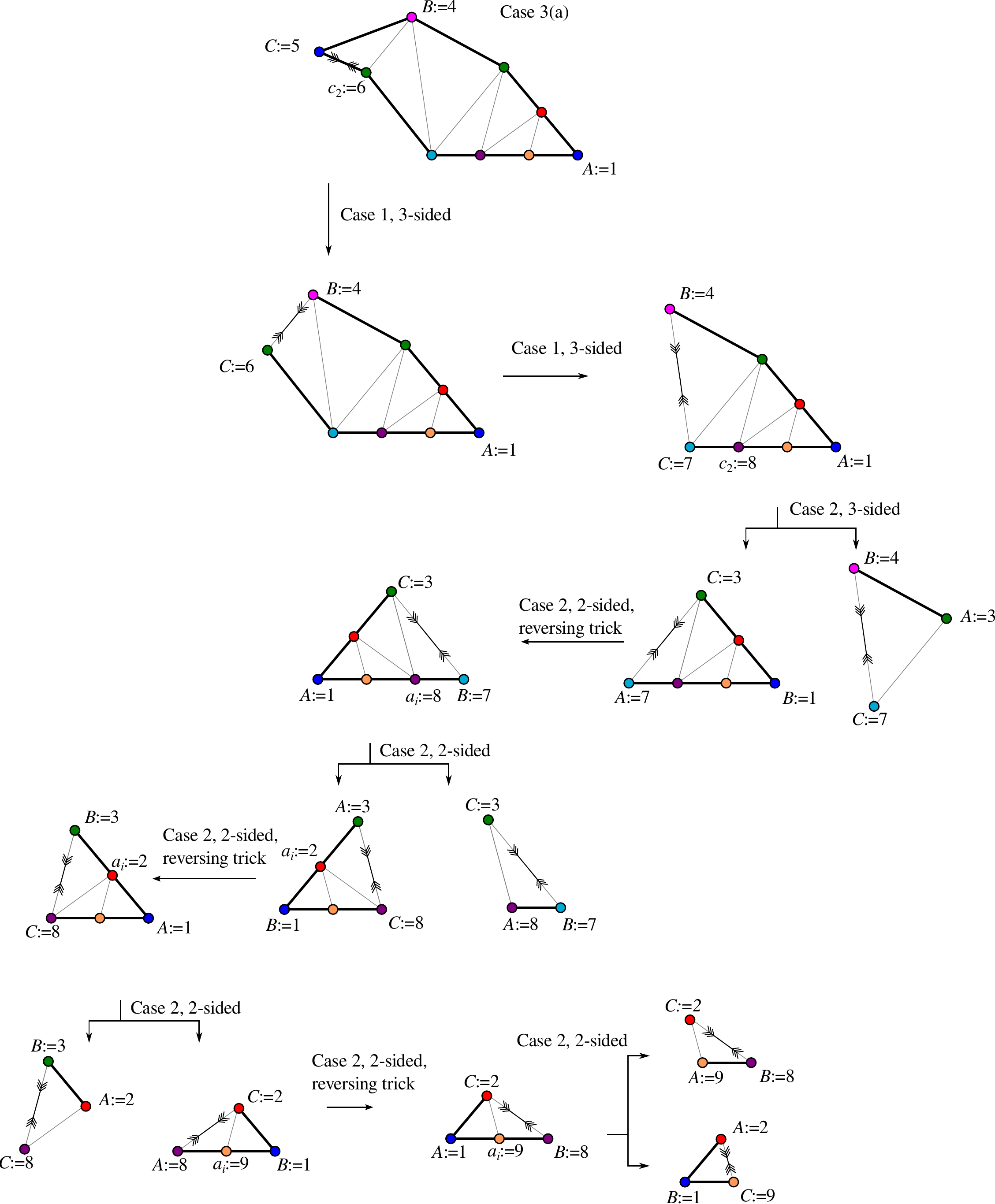}
\caption{Illustration of the example: Decomposing graph $G_R$.}
\label{fig:gr-decompose}
\end{figure}

\begin{figure}[ht]
\includegraphics[width=\textwidth]{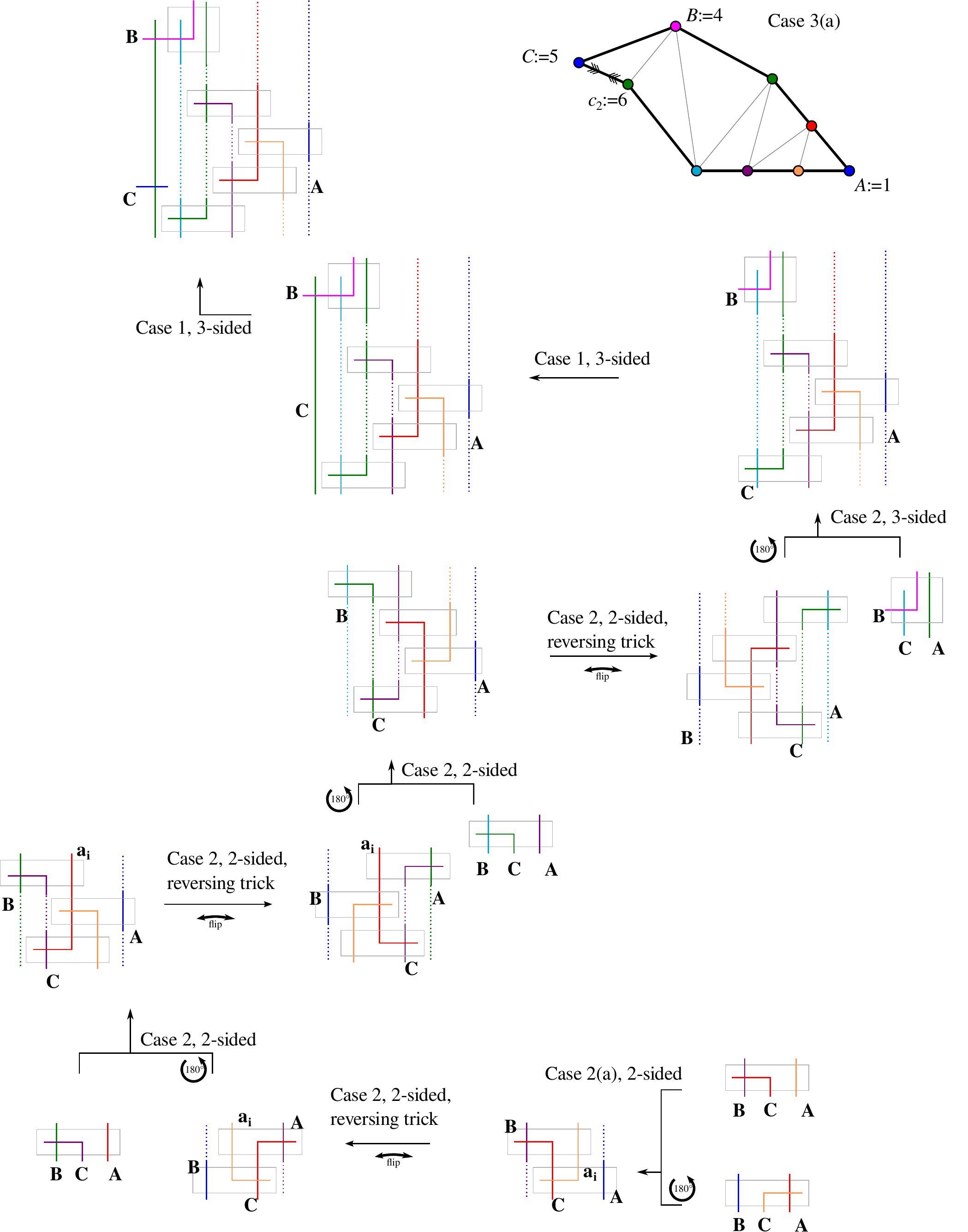}
\caption{Illustration of the example: Composing representation $R_R$.}
\label{fig:gr-compose}
\end{figure}

%\TBcomment{Fig.~\ref{fig:gr-compose}:  The turquoise lines are %not
%visible in a black-and-white printout.  Is there a darker %turquoise?
%Alternatively, can you make them bigger?}

Figure~\ref{fig:complete} shows the complete 3-sided \int{(18,16)} representation of the graph.

\begin{figure}[ht]
\centering
\includegraphics[width=.75\textwidth]{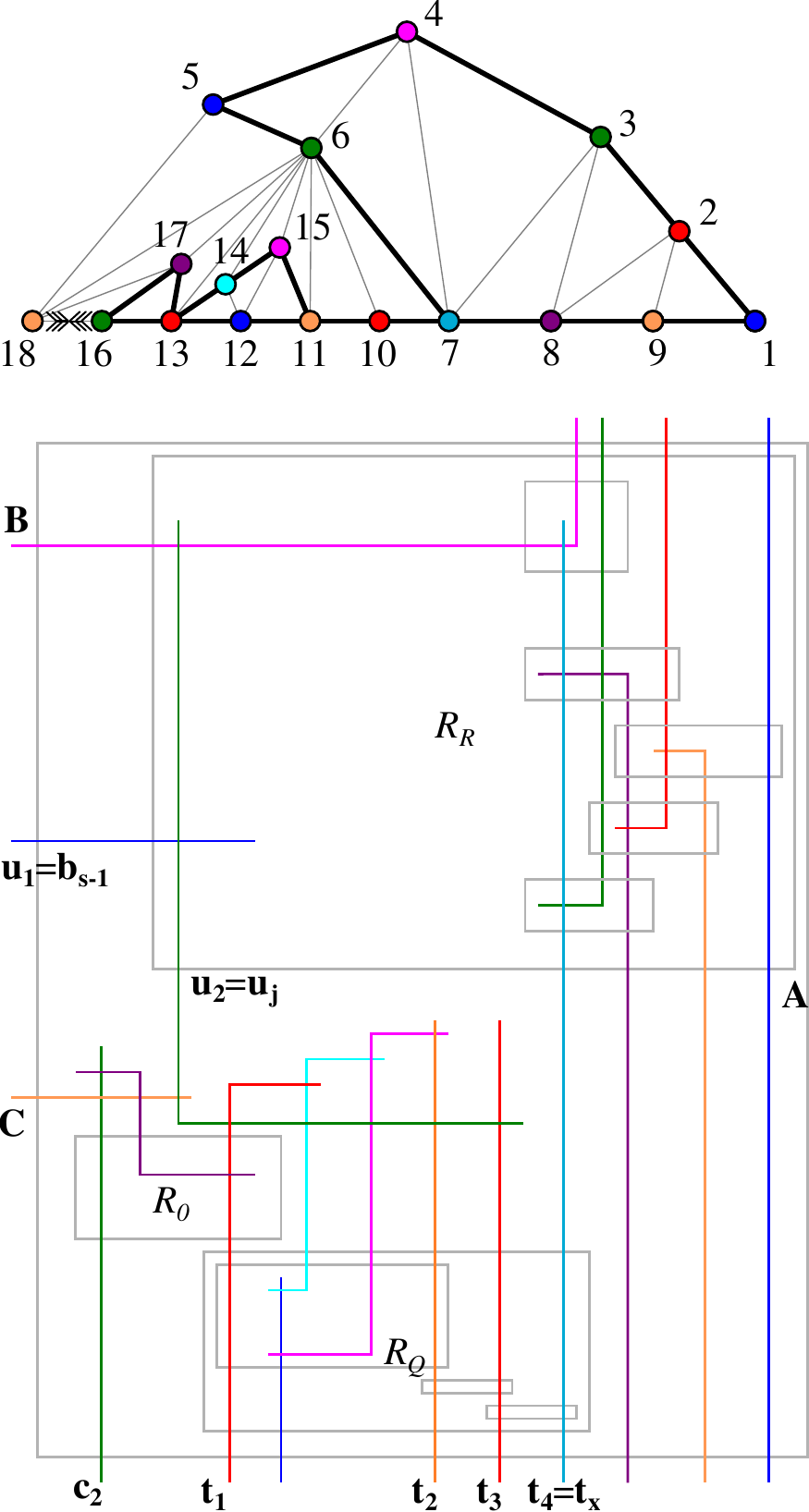}
\caption{Illustration of the example: Complete 3-sided \int{(18,16)} representation.}
\label{fig:complete}
\end{figure}

\end{document}